\newcommand{\Z}{\ensuremath{\mathbb{Z}}\xspace}
\newcommand{\R}{\ensuremath{\mathbb{R}}\xspace}
\newcommand{\Q}{\ensuremath{\mathbb{Q}}\xspace}
\newcommand{\N}{\mathbb{N}}
\newcommand{\realRAM}{\textnormal{real RAM}\xspace}
\newcommand{\wordRAM}{\textnormal{word RAM}\xspace}
\renewcommand{\phi}{\varphi}
\newcommand{\eps}{\varepsilon}
\newcommand{\ER}{\ensuremath{\exists \mathbb{R}}\xspace}
\newcommand{\NP}{\ensuremath{\textrm{NP}}\xspace}
\newcommand{\PSPACE}{\ensuremath{\textrm{PSPACE}}\xspace}
\newcommand{\set}[2]{\ensuremath{\left\{ #1 \, \middle| \, #2 \right\}}}
\renewcommand{\mod}[1]{\left(\mathrm{mod} \; #1\right)}
\newcommand{\expl}{\mathrm{expl}\xspace}
\newcommand{\sign}{\textrm{sign}\xspace}
\newcommand{\wellbehaved}{\mbox{well-behaved}\xspace}
\newcommand{\wellbehavedness}{\mbox{well-behavedness}\xspace}
\newcommand{\Wellbehaved}{\textrm{Well-behaved}\xspace}
\newcommand{\equalzero}[1]{\ensuremath{\mathtt{EqualZero}(#1)}\xspace}
\newcommand{\largerzero}[1]{\ensuremath{\mathtt{LargerZero}(#1)}\xspace}
\newcommand{\largerzeronoarg}{\ensuremath{\mathtt{LargerZero}}\xspace}
\newcommand{\convex}{\ensuremath{\textrm{convexly curved}}\xspace}
\newcommand{\concave}{\ensuremath{\textrm{concavely curved}}\xspace}
\newcommand{\nonlinear}{\mbox{curved}\xspace}
\newcommand{\Nonlinear}{\ensuremath{\textrm{Curved}}\xspace}
\newcommand{\triplealgebraic}{\ensuremath{\textrm{triple algebraic}}\xspace}
\newcommand{\nicelyComputable}{\mbox{nicely computable}\xspace}
\newcommand{\equisatisfiable}{\mbox{equisatisfiable}\xspace}
\newcommand{\var}[1]{\left\llbracket#1\right\rrbracket}
\newcommand{\domain}{\ensuremath{U}\xspace}
\newcommand{\domainadherent}{domain adherent\xspace}
\newcommand{\constraintFormula}{constraint formula\xspace}
\newcommand{\maxIndPlanar}{MISPLANAR\xspace}
\newcommand{\continuousConstraintSatisfactionProblem}{continuous constraint satisfaction problem\xspace}
\newcommand{\ConstraintSatisfactionProblem}{Constraint satisfaction problem\xspace}
\newcommand{\constraintSatisfactionProblem}{constraint satisfaction problem\xspace}
\newcommand{\CCSP}{CCSP\xspace}
\newcommand{\CE}{CE\xspace}
\newcommand{\CEEXPL}{CE-EXPL\xspace}
\newcommand{\etr}{ETR\xspace}
\newcommand{\etrinv}{ETR-INV\xspace}
\newcommand{\CCIEXPL}{CCI-EXPL\xspace}
\newcommand{\CCIEXPLstar}{\ensuremath{\textrm{\CCIEXPL}^*}\xspace}
\newcommand{\CCI}{CCI\xspace}
\newcommand{\CCIstar}{\ensuremath{\textrm{\CCI}^*}\xspace}
\newcommand{\etrconjunction}{\ensuremath{\textrm{ETR-CONJ}}\xspace}
\newcommand{\etrsmall}{\ensuremath{\textrm{ETR-SMALL}}\xspace}
\newcommand{\etrsquare}{\ensuremath{\textrm{ETR-SQUARE}}\xspace}
\newcommand{\etrami}{\ensuremath{\textrm{ETR-AMI}}\xspace}
\newcommand{\etrcompact}{\ensuremath{\textrm{ETR-COMPACT}}\xspace}
\crefname{ineq}{Inequality}{Inequalities}
\crefname{cond}{Condition}{Conditions}
\crefname{sidefigure}{Figure}{Figures}
\newcommand{\pack}[3]{}
\title{On Classifying Continuous Constraint Satisfaction Problems}
\begin{document}

\maketitle

\begin{abstract}
    A continuous constraint satisfaction problem (CCSP) is a 
    constraint satisfaction problem (CSP) 
    with the real numbers as domain.
    We engage in a systematic study to classify CCSPs that are complete for  the
    Existential Theory of the Reals, i.e., \ER-complete. 
    To define this class, we first consider the problem ETR, which also 
    stands for Existential Theory of the Reals. 
    In an instance of this problem we are given a sentence of the form
    $    \exists x_1, \ldots, x_n \in \R : \Phi(x_1, \ldots, x_n)$,
    where~$\Phi$ is a well-formed quantifier-free formula consisting of the symbols $\{0, 1, x_1, \ldots, x_n, +, \cdot, \geq, >, \wedge, \vee, \neg\}$.
    The goal is to check whether this sentence is true.
    Now the class \ER is the family of all problems that admit a polynomial-time many-one reduction to ETR.
    It is known that $\NP \subseteq \ER \subseteq \PSPACE$.

    We restrict our attention on CCSPs with addition constraints ($x+y=z$)
    and which satisfy another mild technical condition.
    Previously, it was shown that multiplication constraints ($x\cdot y = z$), squaring constraints ($x^2 = y$),
    or inversion constraints ($x\cdot y = 1$) are sufficient to establish \ER-completeness.
    We extend this in the strongest possible sense for equality constraints as follows.
    We show that {CCSPs} (with addition constraints and that satisfy another mild technical condition) that have \textit{any} one \wellbehaved \nonlinear
    equality constraint ($f(x,y) = 0$) are \ER-complete.
    We further extend our results to inequality constraints.
    We show that together \textit{any} \wellbehaved \convex and \textit{any} \wellbehaved \concave inequality constraint
    ($f(x,y) \geq 0$ and $g(x,y) \geq 0$)
    imply \ER-completeness on the class of such CCSPs.
    
    Here, we call a function $f \colon \domain \to \R$ \wellbehaved if it is a $C^3$-function, $f(0, 0) = 0$, all its first and second partial derivatives $f_x$, $f_y$, $f_{xx}$, $f_{xy}$, $f_{yy}$ are rational in $(0, 0)$, $f_x(0, 0) \neq 0$ or $f_y(0, 0) \neq 0$, and another mild technical constraint. 
    Furthermore, we call $f$ \nonlinear if the curvature of the curve given by $f(x, y) = 0$ is nonzero at the origin. 
    In this case we call $f$ either \convex if the curvature is negative, or \concave if it is positive.
\end{abstract}

\begin{sidefigure}[htbp]
    \centering
    \includegraphics{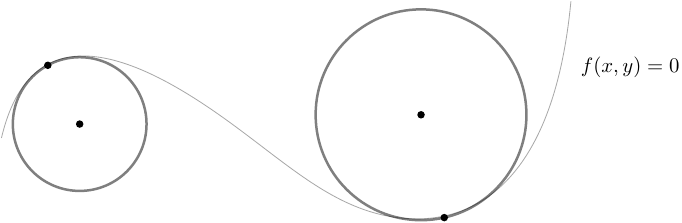}
    \caption{The set defined by $f(x,y) = 0$ is \wellbehaved and at some positions \convex and at others \concave, indicated by the two circles.}
    \label{fig:kappa}
\end{sidefigure}

%\newpage TS

%%%%%%%%%%%%%%%%%%%%%%%%%%%%%%%%%%
\section{Introduction}
\label{sec:intro}
%%%%%%%%%%%%%%%%%%%%%%%%%%%%%%%%%%
In geometric packing, we are given a set of two-dimensional pieces, a container and a set of motions.
The aim is to move the pieces into the container without overlap, and while respecting the given motions.
Recently, Abrahamsen, Miltzow and Seiferth showed that many geometric packing variants are \ER-complete (FOCS 2020)~\cite{etrPacking}.
Despite the fact that the first arXiv version is roughly 100 pages long, the high-level
approach follows the same principle as many other hardness reductions.
First, they showed that a technical intermediate problem is hard and then they reduced from this technical problem.
In their work, \etrinv, a specific continuous constraint satisfaction problem, serves as this intermediate \ER-complete problem. A complete definition of continuous constraint satisfaction problems is provided in the subsequent section.
Specifically, \etrinv contains essentially only addition constraints ($x+y = z$) and inversion constraints ($x\cdot y = 1$).
In the second step, they showed how to encode addition and inversion using geometric objects.
This enabled them to show in a unified framework that various geometric packing
problems are \ER-complete.

The inversion constraint is particularly handy as it was shown in various other 
works that it is particularly easy to encode geometrically~\cite{AnnaPreparation, AreasKleist, NestedPolytopesER, ARTETR}.
Curiously, Abrahamsen, Miltzow and Seiferth left arguably the most interesting case
of packing convex polygonal objects into a square container open.
The missing puzzle piece seemed to be a gadget to encode the inversion constraint for this case.

We take an alternative approach and engage in a systematic study of continuous constraint
satisfaction problems in their own respect.
The aim is to fully classify all continuous constraint satisfaction problems by their
computational complexity.
Polynomial time, \NP-complete, and \ER-complete are some apparent complexities, but as we will
see, they may not be the only ones that are relevant, see \Cref{sub:discussion}.
Our first application shows that packing convex polygons into a square under rigid motions is \ER-complete.
It arises as a combination of a small adaption of the framework by~\cite{etrPacking} and our structural results.
As a result the paper by Abrahamsen, Miltzow and Seiferth considerably shortens to about 70 pages.

\begin{remark}
    Although the \ER-completeness of packing convex polygons into a square container under rigid motions was first pointed out in the conference version of this paper, the proof is more readable in the context of the paper by Abrahamsen, Miltzow and Seiferth~\cite{etrPacking}.
    The arxiv version of their paper incorporated the results presented in the conference version of this paper.
    In turn, some of the technical results of their paper are incorporated in this paper as \Cref{sec:etr-square} for the sake of readability and completeness.
\end{remark}

We give a short introduction 
to constraint satisfaction problems and the complexity class~\ER{}.

\subsection{Constraint Satisfaction Problems}
Constraint satisfaction problems (CSPs) are a wide class of computational decision problems.
In order to give a formal definition, we first introduce several other terms.
\begin{definition}[Signature]
    A \textit{signature} is a set of symbols
    together with \textit{arities} $\ell \in \N$.
     Each symbol has exactly one arity attached to it. 
\end{definition}
Often the signature distinguishes between function symbols and 
    relation symbols. 
    We will only use relation symbols.
 We will only use signatures of finite size, to avoid dealing with issues of description complexity. 
 For finite signatures, we can simply assume that each symbol has constant description complexity.

\begin{definition}[Structure]
    A structure consists of a set~\domain, called the \textit{domain}, 
    a signature~$\tau$
    and an \textit{interpretation} of each symbol.
    If $\alpha \in \tau$ is a symbol of arity~$\ell$, then 
    the interpretation is a set $\alpha \subseteq \domain^{\ell}$.
\end{definition}
In the literature, the term \emph{template} is also used as a synonym for structure.
To make this more tangible, consider the following example.
We define the domain $\domain = \{0,1\}$, the symbol~$+_2$ of arity~$3$
and the symbol~$\mathbf{1}$ of arity~$1$. 
We interpret $+_2$ as $\set{(x,y,z)\in U^3}{x+y\equiv z \mod 2}$,
and $\mathbf{1}$ as $\set{x\in \domain}{x = 1}$.
This defines a structure $S_1 = \langle \domain  , +_2, \mathbf{1} \rangle $.
Note that it is common to use a symbol and its interpretation interchangeably.
Specifically, many symbols are used in the literature with their common interpretation,
e.g., $\leq$ is interpreted as $\set{(x,y) \in \domain}{x \leq y}$ and 
$+$ is interpreted as $\set{(x,y,z) \in \domain^3}{x + y = z}$. 
We refer to the symbols and interpretations of a structure merely as \textit{constraints}. 
We will usually denote these constraints by the equation that they enforce.
For example, we write $x^2 = y$ for the constraint $c = \set{ (x,y)\in \domain^2}{x^2 = y }$.
\begin{definition}[\ConstraintSatisfactionProblem]
    Given a structure $S= (\domain,\tau)$ we define a \textit{\constraintFormula}
    $\Phi := \Phi(x_1, \ldots, x_n)$ to be a conjunction 
    $c_1 \wedge \ldots \wedge c_m$
    for $m \geq 0$, where each~$c_i$ is of the form $c(y_1, \ldots, y_\ell)$ for some 
    $c \in \tau$ and variables $y_1, \ldots, y_\ell \in \{x_1, \ldots, x_n\}$. 
    We also define $V(\Phi) \subseteq \domain^n$ as
    $V(\Phi) := \set{\mathbf{x} \in \domain^n}{\Phi(\mathbf{x})}$.
    In the \textit{\constraintSatisfactionProblem} (CSP) with structure~$S$, we are given a \constraintFormula{} $\Phi$, and are asked whether $V(\Phi) \neq \emptyset$.
\end{definition}
Consider the \constraintFormula  $\Phi = (x_1 + x_2 \equiv x_4 \mod 2) \land (x_2 + x_3 \equiv x_4 \mod 2) \land (x_2 = 1)$. This
gives an instance of a CSP with structure~$S_1$ as above.
Note that 
% $(x_1,x_2,x_3,x_4) = 
$(0,1,0,1) \in V(\Phi)$.
It can be interesting whether the CSP with structure $S_1$ is polynomial time solvable.

In this paper, we restrict ourselves to the reals as domain, i.e., $\domain = \R$
and denote them as \textit{\continuousConstraintSatisfactionProblem{}s} (CCSPs).

We are mainly interested in CCSPs where the constraints are semi-algebraic over the integers (see \Cref{subsec:ER} for a formal definition).
There are some constraints that are not semi-algebraic, in other words, not computable on the real RAM~\cite{SmoothingGap}.
For example, constraints involving  $\sin, \cos, \exp, \log$ or testing if a number is an integer.
We do not want to forbid those types of constraints in the general definition, as it might be interesting to study some of them. 
Some of our hardness results actually apply to non-computable functions.
There are constraints that limit us to finite domains. 
For example, $x(x-1) = 0$. 
Although they are not truly continuous, they are indeed semi-algebraic and thus we have to deal with them as well
in the general definition.
For our results, we use both discrete constraints, e.g.~$x=1$, and truly continuous constraints, e.g.~$f(x) = y$  with $f$ three times differentiable.

Constraint satisfaction problems  have a long history
in algorithmic studies~\cite{schaefer1978complexity, bulatov2006dichotomy, bulatov2017dichotomy, zhuk2020proof, marx2005parameterized,dyer2010approximation}.
There are two application-driven motivations to study them.
On the one hand, it is possible to easily encode many fundamental
algorithmic problems directly as a CSP.
Then, given an efficient algorithm for those types of CSPs, 
we have immediately also solved those other algorithmic problems.
On the other hand, if we can encode CSPs into algorithmic problems,
then  any hardness result for the CSP immediately carries over to the algorithmic problem. 
Next to an application-driven motivation, it is fair to
say that they deserve a study in their own right
as fundamental mathematical objects.
CSPs  form a very versatile language 
and often allow for a complete classification by their
computational complexity. 
Specifically, the \textit{dichotomy conjecture} states that every class of CSP with a finite domain
is either \NP-complete or polynomial-time solvable. 
Schaefer showed the conjecture for domains of size two~\cite{schaefer1978complexity}.
Recently, Bulatov and Zhuk could confirm the conjecture independently~\cite{bulatov2017dichotomy, zhuk2020proof}
for any finite domain.
Note that one can also try to find a classification from the parameterized complexity
perspective~\cite{marx2005parameterized} 
or the approximative counting perspective~\cite{dyer2010approximation}.

In this paper, we focus on CSPs with \R as domain
and we are interested in the class of CSPs that are \ER-complete. 
We want to point out that there is also a large body of research
that deals with infinite domains~\cite{zhuk2021complete, viola2020combined, bodirsky2008non, bodirsky2010complexity, jonsson2016constraint}.
Most relevant for us is the work by Bodirsky, Jonsson and von Oertzen~\cite{bodirsky2012essential}, who also studied CSPs over the reals and showed that a host of them are \NP-hard to decide.
Specifically, they defined a subset $S$ of $\R^n$ as \textit{essentially convex} if for all $a, b \in S$, the straight
line segment intersects the complement $\overline{S}$ of $S$ in finitely many points.
They show that CSPs that contain $x=1$, $x \leq y$, $x+y = z$, and at least one constraint that is \textit{not} essentially convex
are \NP-hard.
However, their techniques do not imply \ER-hardness.
See also~\cite{bodirsky2017constraint} for an overview of results for the real domain.

\subsection{Existential Theory of the Reals}\label{subsec:ER}
The class of the existential theory of the reals \ER (pronounced as `ER') is a complexity class that has gained a lot of interest, especially within the computational geometry community. 
To define this class, we first consider the problem ETR, which also stands for Existential Theory of the Reals. 
In an instance of this problem, we are given  a  sentence of the form
\[
\exists x_1, \ldots, x_n \in \R : \Phi(x_1, \ldots, x_n),
\]
where~$\Phi$ is a well-formed quantifier-free formula consisting of the symbols $\{0, 1, x_1, \ldots, x_n, +, \cdot, \geq, >, \wedge, \vee, \neg\}$, the goal is to check whether this sentence is true. 
We will refer to the formula $\Phi$ which might appear in an ETR-instance as an ETR-formula.
As an example of an ETR-instance, we could take $\Phi  =  (x \cdot y^2 + x \geq 0) \wedge \neg(y < 2x)$.
The goal of this instance would be to determine whether there exist real numbers~$x$ and~$y$ satisfying this formula.
Now the class \ER is the family of all problems that admit a polynomial-time many-one reduction to ETR.
With the notation above, it can be shown that the CSP of the structure $\mathbf{R} = \langle \R, \cdot, +, \mathbf{1} \rangle $ is \ER-complete~\cite{matousek2014intersection}, see also \cref{lem:Reduction-AMI-INV}.

It is known that
\[
\NP \subseteq \ER \subseteq \PSPACE.
\]
The first inclusion follows from the definition of \ER as follows. 
Given any Boolean satisfiability formula, we can replace each positive occurrence of a variable $x$ by $x = 1 $. For example $(x\lor \lnot y) \land (\lnot x \lor z)$ becomes
$( x=1 \lor  \lnot (y=1)) \land ( \lnot (x=1) \lor z=1)$.

Showing the second inclusion was first done by Canny in his seminal paper~\cite{canny1988some}. 
The reason that \ER is an important complexity class is that a number of common problems in computational geometry, game theory, machine learning, and other areas have been shown to be complete for this class. 

We 
use $|\Phi|$ to denote the length of $\Phi$, that is, the number of bits necessary to write down
$\Phi$. 

We want to point out that there are some subtleties in the definition of the formula length. 
Naively, to encode a natural number $n$ requires $\Theta(n)$ bits, i.e., $n = 1 + 1 + \ldots + 1$.
However, it is possible to encode it in $O(\log n)$ bits, using Horner's rule applied to the binary expansion of $n$. 
For example,  $27 =  1 + 2(1 + 2(0 + 2(1 + 2))) = 1 + (1+1)(1 + (1+1)(0 + (1+1)(1 + (1+1)))) $.
Furthermore, we want to emphasize that the reductions used for defining \ER are performed in the \wordRAM model (or equivalently on a Turing machine), and not on a real Random Access Machine (\realRAM) or in the Blum-Shub-Smale model.

The definition of a formula naturally leads to the definition of semi-algebraic sets.
We say a set $S\subseteq \R^n$ is \textit{semi-algebraic}, if there exists a formula $\varphi$ such that 
$S = \set{x\in \R^n}{\varphi(x)}$.
Consequently, the (bit)-complexity of a semi-algebraic set is the shortest length of any formula
defining the set.
Note that our definition of a semi-algebraic set is more common in a computer science context~\cite{matousek2014intersection}.
In the context of algebraic geometry, semi-algebraic sets would usually allow polynomials with real coefficients. 
For example, consider the set $S = \set{x\in \R}{x-e = 0}$, containing Euler number $e$.
Note that $S$ is typically semi-algebraic for an algebraic geometer~\cite{basu2006algorithms}, but typically not for a computer scientist~\cite{matousek2014intersection}.
Given a point $x\in \R^n$ and a semi-algebraic set~$S\subset \R^n$, we can decide on the \realRAM if $x\in S$.
(We refer the reader to the work by Erickson, Hoog, and Miltzow for a detailed definition of the \realRAM and decidability~\cite{SmoothingGap}.)
This is easy to see as we only need to evaluate the defining formula of $S$.
Interestingly the reverse direction also holds. 
If we can decide $x\in S$ for any $x\in \R^n$ then $S$ needs to be semi-algebraic~\cite{SmoothingGap}.

\paragraph{Scope.} The main reason that the complexity class \ER gained traction in recent years 
is the increasing number of important algorithmic problems that are \ER-complete.
Marcus Schaefer established the current name and pointed out first that
several known NP-hardness reductions actually  imply \ER-completeness~\cite{Schaefer2010}.
Note that some important reductions that establish \ER-completeness were done
before the class was named.

Problems that have a continuous solution space and non-linear relation between partial solutions
are natural candidates to be \ER-complete.
Early examples are related to the recognition of geometric structures:
points in the plane~\cite{mnev1988universality,shor1991stretchability},
geometric linkages~\cite{schaefer2013realizability},
segment graphs~\cite{kratochvil1994intersection, matousek2014intersection},
unit disk graphs~\cite{mcdiarmid2013integer, kang2011sphere},
ray intersection graphs~\cite{cardinal2017intersection}, and
point visibility graphs~\cite{cardinal2017recognition}.
In general, the complexity class is more established in the graph drawing community~\cite{AnnaPreparation, AreasKleist, schaefer2021complexity, erickson2019optimal}.
Yet, it is also relevant for studying polytopes~\cite{richter1995realization, NestedPolytopesER}.
There is a series of papers related to Nash-Equilibria~\cite{berthelsen2019computational, Schaefer-ETR, garg2015etr, bilo2016catalog, bilo2017existential}.
Another line of research studies matrix factorization problems~\cite{chistikov_et_al:LIPIcs:2016:6238, shitov2016universality, Shitov16a, TensorRank}.
Other \ER-complete problems are the Art Gallery Problem~\cite{ARTETR,ArtJack}, Covering polygons with convex polygons~\cite{abrahamsen2021covering}, and 
training neural networks~\cite{abrahamsen2021training, TrainFullNeuralNetworks, Z92}.

\paragraph{Practical Implications.}
At first glance, the significance of \ER-completeness might not be immediately apparent, particularly given that most of these problems are already known to be \NP-hard.
The significance has different aspects.
One reason is that we are intrinsically interested in establishing
the true complexity of important algorithmic problems.
Furthermore, \ER-completeness helps us to understand better the 
difficulties encountered when designing algorithms for those types of problems.
While we have a myriad of techniques for NP-complete problems, most
of these techniques are of limited use when we consider \ER-complete problems.
The reason is that \ER-complete problems have an infinite set of possible
solutions that are intertwined in a sophisticated way.
Many researchers have hoped
to discretize
the solution space, but success was limited~\cite{Simon, matousek2014intersection}.
The complexity class \ER connects all of those different problems and tells 
us that we can either discretize all of them or none of them.
To illustrate our lack of sufficient worst-case methods, note that we do 
not know the smallest square container to pack eleven unit squares, see \Cref{fig:eleven}.

\begin{sidefigure}[tbhp]
    \centering
    \includegraphics[page = 3]{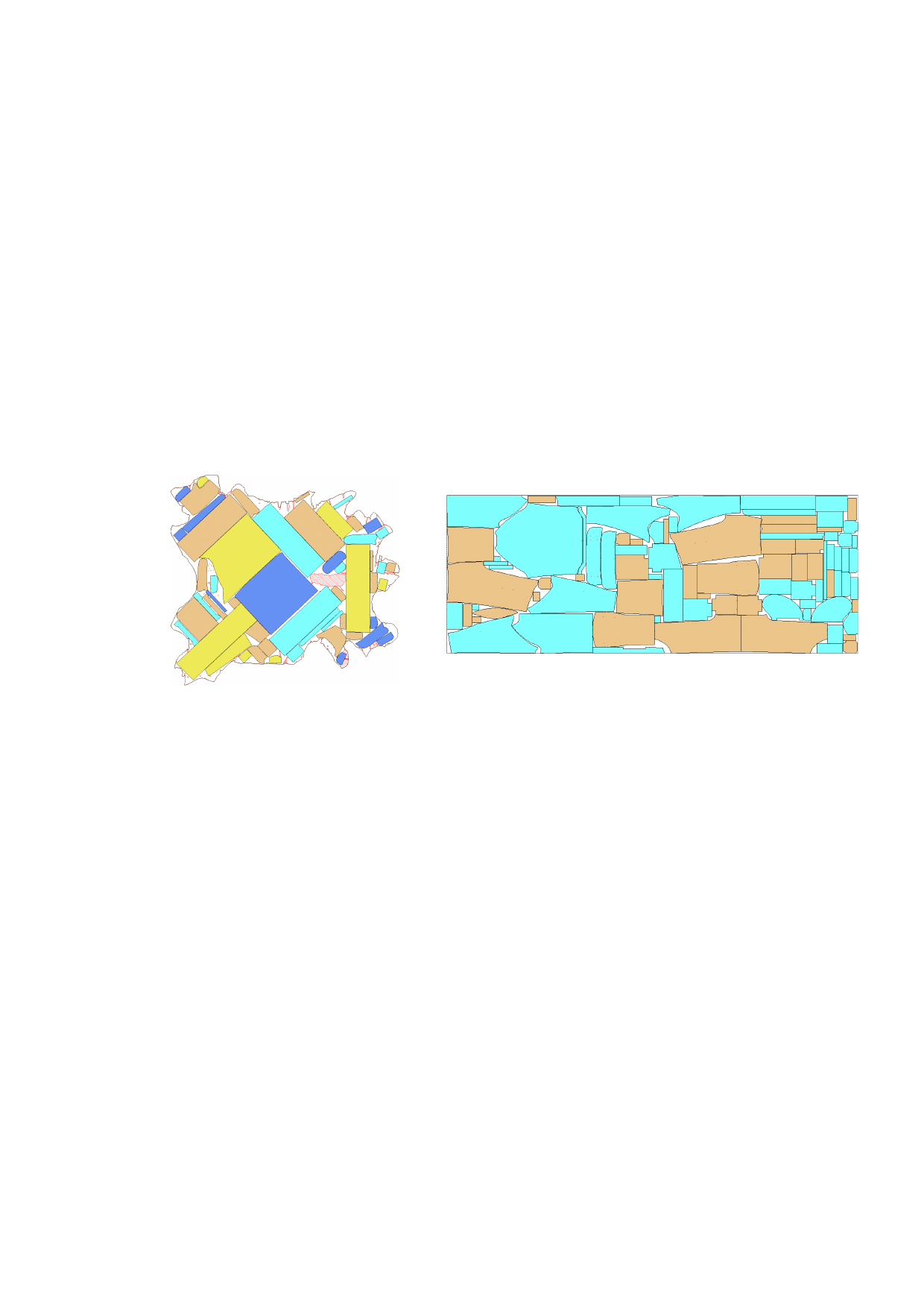}
    \caption{Left: Five unit squares into a minimum square container. Right: 
    This is the best known packing of eleven unit squares into a square container~\cite{gensane2005improved}.}
    \label{fig:eleven}
\end{sidefigure}

\paragraph{Technique.}
In order to show \ER-hardness, usually two steps are involved.
The first step is a reduction to a technical variant of ETR.
The second step is a reduction from that variant to the problem at hand.
Those ETR variants are typically CCSPs with only very limited types of constraints.
It is common to have an addition constraint ($x+y=z$), 
and a non-linear constraint, like one of the following:
$$
    z = x\cdot y, \quad 
    z = x^2,    \quad 
    1 = x\cdot y.
$$
To find the right non-linear constraint is crucial for the second step, as it is often very difficult to encode non-linear constraints in geometric problems.
Previous proof techniques relied on expressing multiplication
indirectly using other operations.
To be precise, we say that a constraint~$c$ of~arity $\ell$ has a \textit{primitive positive definition}
in structure~$S$, if there is a \constraintFormula $\Phi$ in~$S$ such that $c(y_1,\ldots,y_\ell)$ 
if and only if $\exists x_1,\ldots,x_k :\Phi(y_1,\ldots,y_\ell,x_1,\ldots,x_k)$.
In that case, $\Phi$ is called a \textit{primitive positive formula}, or just \textit{pp-formula}.
For instance, we can express multiplication using squaring and addition as follows:
\[ x\cdot y = \left(\frac{x+y}{2}\right)^2 - \left(\frac{x-y}{2}\right)^2.\]
This translates into a pp-formula 
as follows. $\exists A_0,A_1,A_2,B_0,B_1,B_2 : $

\vspace{0.3cm}
$
\begin{array}{rcl}
   A_0 & = & x+y,  \\
   A_0 & = & A_1 + A_1, \\
   A_2 & = & A_1^2, \\
\end{array} $
\hspace{1cm}
$
\begin{array}{rcl}   
   x & = & y+B_0, \\
   B_0 & = & B_1 + B_1, \\
   B_2 & = & B_1^2,
\end{array}   
$
\hspace{1cm}
$A_2 =  B_2 + z.$
\vspace{0.3cm}

\noindent Given a pp-formula, we can reduce a CSP with constraint~$c$ 
to a CSP with a different signature.
Here, we replaced the ternary constraint $x\cdot y = z$ by the binary constraint $x^2 = y$.

Furthermore, there are often some range constraints of the form 
$x>0$, $x\in [1/2,2]$ or even $x \in [-\delta,\delta]$, for some
$\delta = O(n^{-c})$, where $n$ is the number of variables. These constraints can be imposed on either all, or a subset of the variables.
This makes the above reduction more involved, as we need to pay attention to the ranges in every step.
Range constraints are important as we may
only be able to encode variables in a certain limited range.
Finally, it may be useful to know some structural properties
of the variable constraint graph, like planarity~\cite{AnnaPreparation}.

Overall, those techniques have their limitations.
As the reductions rely on an explicit way to express
one non-linear constraint by another non-linear constraint and addition,
we have to find those identities.
To illustrate this, we encourage the reader to find a way to express multiplication (in some range)
using $x^2 + y^2 = 1$ and linear constraints.
(We consider the constraint $x= 1$ to be linear. See \Cref{app:Circle} for the solution.)
This gets more tricky when dealing with inequality constraints.
For instance, it is not clear how to express multiplication with
$x\cdot y \geq 1$ and $x^2 + y^2 \geq 1$.
We offer $10$~euro to the first person, who is able 
to find a pp-formula to do so.
Note that our theorems imply that those two inequalities together 
with linear constraints are enough to establish \ER-completeness,
but we do not describe a pp-formula.
At last, translating a pp-formula into a reduction that
respects the range constraints for every variable becomes very tedious
and lengthy.
Furthermore, it only establishes \ER-completeness for those specific constraints.
See Abrahamsen and Miltzow~\cite{abrahamsen2019dynamic} for some of those reductions.

To overcome this limitation, we develop a new technique that 
establishes \ER-completeness for virtually any one non-linear equality constraint.
We extend our results and show that any one 
convex and any one concave inequality constraint are also sufficient to establish \ER-completeness.
See \Cref{sub:results} for a formal description of our results and \Cref{sub:overview} 
for an overview of our techniques. 

\subsection{Results}
\label{sub:results}
We focus on the special case with essentially
only one addition constraint and any one non-linear constraint, see \Cref{def:f-ETR}.
While this may seem like a strong limitation,
note that addition constraints are commonly easy to encode.
In most applications, the non-linear constraint is the crucial~one.
Before we introduce the main definition, we first specify more precisely how we define the non-linear constraints.

\begin{definition}[Function constraints]
Let $\domain \subseteq \R^2$ and let $f \colon \domain \to \R$ be any function. 
Now we define two constraints corresponding to $f$ as
\[
\equalzero{f}
= \set{(x, y) \in \domain}{f(x,y) = 0} \cup (\R^2 \setminus \domain),
\]
and 
\[
\largerzero{f}
= \set{(x, y) \in \domain}{f(x,y) \geq 0} \cup (\R^2 \setminus \domain).
\]
\end{definition}

For convenience, we often use the shorthand notation $f(x, y) = 0$ and $f(x, y) \geq 0$.
Note that this definition means that the constraints \equalzero{f} and \largerzero{f} are satisfied whenever $(x, y)$ is outside of the domain $\domain$ of the function $f$.
We defined the constraints in this way as it turns out that it makes the soundness part of the proof for future reductions considerably easier.
One simply does not need to worry about solutions leaving the domain.
We will show that our difficult instances are actually \domainadherent, as we will define below.
\begin{definition}[\domainadherent]
Let $\Phi$ be a \CCSP formula
that contains some function constraints, i.e.~$\largerzero{f}$ or $\equalzero{f}$.
Here $f$ is a function on the domain $\domain\subset \R^2$.
We say a solution $x = (x_1,\ldots,x_n)\in \R^n$ is \emph{\domainadherent} 
if for every function constraint on variables $x_i,x_j$,
we have that $(x_i,x_j) \in \domain$.
We say $\Phi$ is \domainadherent if this is true for all solutions.
\end{definition}

\begin{definition}[Curved equality problem (\CE)] \label{def:f-ETR} Let $\domain\subseteq \R^2$ and let $f$ be a function $f \colon \domain \rightarrow \R $.
Then we define the signature $C(f, \delta)$ as
\[C(f, \delta) = \{x+y = z,\;  \equalzero{f},\; x \geq 0,\; x = \delta\}.\]
In the \textit{\CE{} problem}, the input consists of a $\delta \in \R$ and a constraint formula $\Phi$ on $n$ variables.
The formula $\Phi$ corresponds to the structure $\langle \R, C(f, \delta) \rangle$, where we are promised that $V(\Phi) \subseteq [-\delta, \delta]^n$ and that $V(\Phi)$ is \domainadherent. 
We are asked whether $V(\Phi) \neq \emptyset$.
\end{definition}

Note that the two promises $V(\Phi) \subseteq [-\delta, \delta]^n$ and \domainadherent, while formally independent, are proven essentially in the same way, by scaling variables sufficiently close to the origin.

We would like to emphasize that we are not having a constraint of the form $x\in [-\delta,\delta]$ in the signature. 
The property $V(\Phi) \subseteq [-\delta, \delta]^n$ will be imposed only by using constraints of the structure $\langle \R , C(f,\delta) \rangle$.
Thus, we are dealing with so-called promise problems from computational complexity.
Note that this promise is difficult to check.
We will discuss promise problems again in \Cref{sub:discussion}.

This promise is also the reason why we defined the constraint $\equalzero{f}$ to be true everywhere outside of the domain of $f$.
We anticipate that most applications of our result will likely focus on the case where $[-\delta, \delta]^2$ is a subset of \domain.
We will discuss the \equalzero{f} constraint in more detail in \Cref{sub:discussion}.

Note that although the problem is called curved equality problem, we make 
no assumptions on~$f$ as part of the definition. 
We do this explicitly, as there are various technical ways to formulate those assumptions.
Abrahamsen, Adamaszek, and Miltzow~\cite{ARTETR, abrahamsen2019dynamic} essentially showed 
that \CE{} is \ER-complete for $f = (x-1)(y-1) - 1$.
Here, we generalize this to a wider set of functions~$f$ defined 
below.
Recall that a set $T$ is a \textit{neighborhood} of a point $p$ if there is an open set $S$ with $p \in S \subseteq T$.

\begin{definition}[\Wellbehaved, \triplealgebraic]
    A function $f\colon \domain \rightarrow \R$ is \textit{\wellbehaved} 
    if the following conditions are~met. 
    \begin{itemize}
        \item $f$ is a $C^3$-function, with $\domain \subseteq \R^2$ being a neighborhood of $(0, 0)$,
        \item 
        $f(0, 0) = 0$, and all partial derivatives $f_x$, $f_y$, $f_{xx}$, $f_{xy}$ and $f_{yy}$ are rational
        in $(0,0)$.
        \item $f_x(0, 0) \neq 0$ or $f_y(0, 0) \neq 0$.
    \end{itemize}

    A function $f\colon \domain \rightarrow \R$ is \textit{\triplealgebraic} if each of the three sets \domain, 
        \set{(x,y)\in \domain}{f(x,y) = 0} and \set{x,y\in \domain}{f(x,y) \geq 0}
        is semi-algebraic.
\end{definition}

Note that if $p(x,y)$ is a polynomial of the form $\sum_{i,j} a_{i,j} x^i y^j$, then 
$p$ is \wellbehaved if and only if $a_{0,0} = 0$, $a_{1,0},a_{0,1},a_{2,0},a_{1,1},a_{0,2}$ are rational,
and 
at least one of $a_{1,0}$ and $a_{0,1}$ is nonzero.
We want to point out that some readers might find it easier to think of \domain 
as a disk with a small radius. 
To see that this is equally strong note that a disk around the origin is also a neighborhood of the origin
and also a semi-algebraic set. (We need to ask for the radius to be an algebraic number.)
Thus the \ER-completeness also works for the case that \domain is such a disk.
But also the~\ER-completeness for disks implies the \ER-completeness for neighborhoods.
Although a formal proof is a bit tedious the intuition is that we can restrict the
range of the variables to lie within the disk given by the neighborhood condition.
We decided to use the language of neighborhoods, instead of disks, as we find it more
convenient to work with neighborhoods, at the cost of being a bit more abstract than
absolutely necessary.

\begin{definition}[\Nonlinear]
    Let  $f\colon \domain \rightarrow \R$ be a function that is \wellbehaved. 
    We write the curvature of $f$ at zero by
    \[\kappa = \kappa(f)  = \left(\frac{f_y^2f_{xx} - 2f_xf_yf_{xy} + f_x^2f_{yy}}{(f_x^2 + f_y^2)^{\frac{3}{2}}}\right)(0,0),\]
    see \Cref{fig:kappa} for an illustration.
    We say $f$ is 
    \begin{itemize}
        \item \textbf{\nonlinear} if $\kappa(f) \neq 0$,
        \item \textbf{\convex} if $\kappa(f) < 0$, and
        \item \textbf{\concave} if $\kappa(f) >0$.
    \end{itemize}
\end{definition}

Note that the magnitude of $\kappa$ equals the inverse of the radius of the  osculating circle of the curve $\{(x,y)\in \R^2 : f(x,y) = 0\}$ at the origin, see \Cref{fig:kappa2}. 
This is the circle which approximates the curve as close as possible. 
The sign of $\kappa$ indicates on which side the osculating circle touches the curve. It is positive if this is on the side where $f$ is negative, negative if the circle touches on the side where $f$ is positive, and zero if the origin is an inflection point and the osculating circle is a line.

\begin{sidefigure}[tbph]
    \centering
    \includegraphics{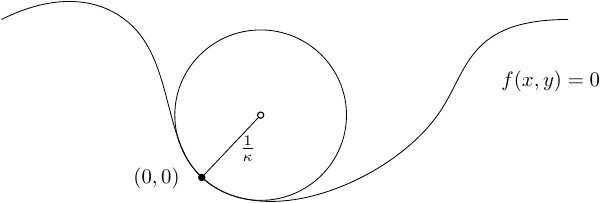}
    \caption{
    The formula for $\kappa$ describes the inverse of the radius of the osculating circle touching $(0,0)$ on the curve defined by $f$.}
    \label{fig:kappa2}
\end{sidefigure}

Note that we can define the simpler expression $\kappa' = \kappa'(f)$
 \[\kappa'(f)  = \left(f_y^2f_{xx} - 2f_xf_yf_{xy} + f_x^2f_{yy}\right)(0,0),\]
 and it holds that $\sign(\kappa) = \sign (\kappa')$.
 For this reason, we will work with $\kappa'$ instead of $\kappa$.

Consider a polynomial~$p$ of the form
$p(x, y) = \sum_{i, j} a_{ij} x^iy^j$.
Then $\kappa'(p)$ equals
\[ \kappa'(p) = a_{01}^2 2a_{20} - 2a_{10}a_{01}a_{11} + a_{10}^2 2a_{02}.\]

In order to define the possible domain of $\delta$, we still need one more definition.

\begin{definition}
We say a function $T \colon \N \to \Q_{>0}$ is \textit{bounded} if there is a constant $C$ such that~$T(n)\leq C$, for all $n$.
 The function $T$ is referred to as \textit{\nicelyComputable} if $T(n)$ can be expressed as a fraction of integers represented in binary, and this representation is computable in time polynomial to the size of the integer representation of $n$.
\end{definition}
We will create instances with $\delta = T(n)$ as input.
Some functions that satisfy these conditions are $T(n) = 1$,  $T(n) = n^{-c}$, for some fixed constant $c$, or
the function $T(n) = 2^{-n}$.

Now, we are ready to state our main theorem for equality constraints.
     \newcommand{\theoremastring}{}
\begin{restatable}{theorem}{fETRTHM}
\label{thm:Equality}{\normalfont\bfseries\theoremastring}
Let $f \colon \domain \rightarrow \R$ be a function that is  \wellbehaved, \nonlinear, and \triplealgebraic. 
Let $T$ be a function that is both bounded and \nicelyComputable. In this setting, \CE{} is \ER-complete, even when considering only instances where $\delta = T(n)$, with $n$ being the number of variables.
\end{restatable}

Note that \ER-membership follows from the fact that \domain and \set{x,y\in\R^2}{f(x,y) = 0} 
        are semi-algebraic.
        Therefore, there must be \etr formulas $\varphi_U$ and $\varphi_f$ such that the following two statements hold.
        \begin{itemize}
            \item  $\varphi_U(x,y)$ is true if and only if $(x,y)\in U$.
        \item $\varphi_f(x,y)$ is true if and only if $(x,y)\in \set{x,y\in \domain}{f(x,y) = 0}$.
        \end{itemize}
        Let $\Phi$
        be a \CE-formula.
        We replace each occurrence of $\equalzero{f}(x,y)$ in $\Phi$
        by $\varphi_f(x,y) \lor \lnot \varphi_U(x,y)$.
        This gives us a new equivalent \etr formula $\Phi'$.
        And thus \CE is in \ER.
        Note that the fact that $f$ is \triplealgebraic is not needed for the \ER-hardness part of \Cref{thm:Equality}.

\bigskip

The motivation for this article was to give a convenient tool to show \ER-hardness
of geometric packing. 
Unfortunately, we are only capable of encoding inequality constraints 
in geometric packing. 
Thus, in order to apply our techniques to geometric packing, 
we  adapt \Cref{thm:Equality} to inequality constraints.
In the following we define the \textit{convex concave inequality problem} (\CCI),
which is completely analogous to \CE{} with one subtle 
difference.
The constraint $f(x,y) = 0$ is replaced by 
$f(x,y)\geq 0$ and $g(x,y)\geq 0$.
The \nonlinear constraint $f(x,y) = 0$ is replaced by \convex and \concave conditions $f(x,y) \geq 0$ and $g(x, y) \geq 0$.

\begin{definition}[Convex concave inequality problem (\CCI)] \label{def:CCI} 

Let $\domain\subseteq \R^2$ and let $f,g$ be functions such that $f,g \colon \domain \rightarrow \R$.
Then we define the signature $C(f,g, \delta)$ as
\[C(f,g,\delta) = \{x+y = z,\;  \largerzero f,\;  \largerzero{g},\; x \geq 0,\; x = \delta\}.\]

In the \textit{\CCI{} problem}, the input consists of a $\delta \in \R$ and a constraint formula $\Phi$ on $n$ variables.
The formula $\Phi$ corresponds to the structure $\langle \R, C(f,g, \delta) \rangle$, where we are promised that $V(\Phi) \subseteq [-\delta, \delta]^n$ and \domainadherent. 
We are asked whether $V(\Phi) \neq \emptyset$.
\end{definition}

    \newcommand{\theorembstring}{}
\begin{restatable}{theorem}{CCITHM}
 \label{thm:Inequality}{\normalfont\bfseries\theorembstring}
Let  $f,g \colon \domain \rightarrow \R$ be \wellbehaved and \triplealgebraic.
Furthermore, let $f,g$ be respectively \convex and \concave.
Let $T$ be bounded and \nicelyComputable.
 In this setting, \CCI{} is \ER-complete, even when considering only instances where $\delta = T(n)$, with $n$ being the number of variables.
\end{restatable}

To show that \CCI is in \ER 
goes along the same lines 
as the proof that \CE is in \ER.
Again, the fact that $f$ is \triplealgebraic is not needed for the \ER-hardness part of \Cref{thm:Inequality}.

%%%%%%%%%%%%%%%%%%%%%%%%%%%%%%%%%%%%%%%%%%%%%%%%%%%%%%%%%%
\subsection{Discussion}
\label{sub:discussion}
%%%%%%%%%%%%%%%%%%%%%%%%%%%%%%%%%%%%%%%%%%%%%%%%%%%%%%%%%%
% 
\Cref{thm:Equality} and \Cref{thm:Inequality} are strong generalizations
 of the \ER-completeness of \etrinv. 
The problem \etrinv was instrumental in establishing \ER-completeness for both the Art Gallery problem~\cite{ARTETR} and the conference version of the proof for geometric packing~\cite{etrPacking}.
 
 One of the major obstacles of the \ER-completeness proofs of the Art Gallery problem
 was to find a way to encode inversion. 
 If the authors had known \Cref{thm:Equality} back then, it would have been sufficient to encode essentially any \wellbehaved and \nonlinear constraint on two variables, which is much easier.
 In this section, we discuss strengths, limitations and different perspectives with respect to our main results.

\paragraph{Comparison of Main Theorems.}
In order to compare \Cref{thm:Equality} and \Cref{thm:Inequality}, consider the following two
signatures and their interpretation for some given \wellbehaved and \nonlinear~$f$:
\[C_1 = \{ x+y = z,\ x\geq 0,\ x = \delta,\ \equalzero{f} \}, \]
and 
\[C_2 = \{ x+y = z,\ x\geq 0,\ x = \delta,\ \largerzero{f},\
\largerzero{-f}  \}.\]
Clearly,~$C_2$ is more expressive than~$C_1$.
Therefore, \ER-hardness of \CE implies \ER-hardness of \CCI in the special case $g = -f$.
For unrelated~$f$ and $g$ there is no further relation between the two theorems.
However, in the special case of $f = y- \bar{f}(x)$
and $g = \bar{f}(x) -y$, 
\ER-hardness of \CCI implies \ER-hardness of \CE as follows:
we can encode each constraint of the form $f(x,y) = y-\bar{f}(x) \geq 0$ using the new constraints
$f(x,z_1) = z_1-\bar{f}(x) = 0$, $z_2 = y - z_1$, and $z_2 \geq 0$. 
Similarly, constraints of the form $g(x,y) = \bar{f}(x)-y \geq 0$ can be encoded in $C_1$.

\paragraph{Promise Problems.}
A \textit{promise problem} is defined as an algorithmic problem where the instances are restricted to those which satisfy a certain condition.
In other words, we \textit{guarantee} that the condition holds.

In \Cref{thm:Equality}, we gave a promise on the problem instance. 
Namely, we guarantee that the solution set will be contained
in a box of a certain size. 
It is not very common that promises are formulated in this way. 
However, promise problems are very common and in particular, they can also be treated as decision problems. 
A prime example is the independent set problem on planar graphs (\maxIndPlanar ).
\maxIndPlanar is known to be \NP-complete and it is also a promise problem.
The main difference is that we can check whether a graph is planar in polynomial time. 
However, even if planarity was undecidable
the \NP-completeness of \maxIndPlanar would still be valid.

In our scenario, generally, it is not straightforward to verify the promise that the value of each variable lies within the range $[-\delta,\delta]$.
This makes our result a bit unusual.
However, it is relatively straightforward to enforce that all solutions are in the desired range.
This follows in two steps. 
In the first step,
we employ a known lemma from the real algebraic geometry literature, which ensures that some solutions must be inside a large ball.
Thereafter, we replace each variable by a scaled copy of itself.
This will require some small adaption of the constraints.
We can then enforce $-\delta\leq x \leq \delta$, by the constraints ($s = x + \delta $ and $s\geq 0$) as well as 
($t  + x= \delta $ and $t\geq 0$) without changing the truth value of the instance.

 \paragraph{First-order Theory of the Reals.} 
    With the full first-order theory of the reals it is easier than with CCSPs to define all semi-algebraic sets. 
    Specifically, we can define non-convex sets using only convex constraints, as follows. 
    If we allow a single convex constraint $D = \set{(x,y) \in \R^2}{x^2 + y^2 \leq 1}$, then the following formula $\varphi$ describes the upper half of the boundary of the disk, 
    given by $\set{(x, y) \in \R^2}{x^2 + y^2 = 1 \land y \geq 0}$:
    \[
    \phi(x, y) = D(x,y) \land \forall_{z \in \R} (D(x,z) \Rightarrow z \leq y).
    \]
    Another way to construct a non-convex constraint using convex constraints is as follows.
    \[\phi(x,y) = D(x,y) \land \lnot D(x-1,y)\]
    Note that the second example only uses negations and no quantifiers.
        
    Using just the language of CCSPs, it is however impossible to encode such a set using only linear constraints and the constraint $D(x,y)$, as any CCSP instance of this form describes a convex set. Note in particular that we may not apply quantifier elimination to the given formula~$\phi$, since this is impossible without introducing non-linear constraints different from~$D(x,y)$. 
    For a more extensive analysis of the semi-algebraic sets which can be described using the first order theory of the reals when the set of atomic formulas is restricted, we refer to \cite{marker1992additive, peterzil1992structure, peterzil1993reducts}.

    \paragraph{Convex Constraints.} We want to point out that addition and \convex constraints alone seem not
to be sufficient to establish \ER-completeness, as convex programs 
have efficient approximation algorithms~\cite{boyd2004convex}.
As convex programming is so efficiently fast solvable in practice
it would be a big surprise if it would be \ER-complete.
However, there are reasons to believe that
convex programming is not polynomial-time solvable, 
see the discussion by O'Donnell~\cite{o2017sos}.
See~\cite{pataki2021exponential, allender2009complexity, tarasov2008semidefinite} for an in-depth discussion why convex programming is potentially not polynomial-time solvable.

\paragraph{Concave Constraints.} When we remove the convex constraint but keep the concave constraint 
in \Cref{thm:Inequality} 
then we do not know if the problem is \ER-complete.
It is easy though to establish \NP-hardness in this case~\cite{bodirsky2012essential}. 
We consider the option that there is another complexity class \texttt{Concave} that characterizes
such CCSPs.
As with geometric packing with convex pieces, polygonal containers and translations grant the possibility to encode only linear and concave constraints. This problem is a natural 
candidate to be \texttt{Concave}-complete.
We are curious if this intuition could be supported in some mathematically rigorous way.
    
\paragraph{Unary Constraints.} 
    Note that the constraint $x = \delta$ is necessary to ensure that the origin is not always a valid solution.
    Although $x\geq 0$ may not be 
    necessary to imply \ER-completeness,
    our proof heavily relies on it.
    As an example where this constraint is not needed, consider the case where we have the constraint $y = x^2$. In this case we could replace any constraint of the form $x \geq 0$ by $x = z^2$, for some new variable $z$.
    In applications, it is usually very easy to encode unary constraints.

    \paragraph{Binary Constraints.} If we remove the addition constraint, we are left only with constraints
    in at most two variables. This seems too weak to establish \ER-completeness,
    as setting~$x$ determines~$y$, up to finitely many options once we impose the constraint 
    $f(x,y) = 0$.
    On the other hand, very large and irrational solutions can be enforced,
    which makes it unlikely for those CCSPs to be contained in general in \NP.
    We wonder about the algorithmic complexity of CCSPs with only binary constraints.
    
    \paragraph{Ternary Constraints.}
    Given the discussion above, it seems plausible that
    at least one ternary constraint is required to establish \ER-completeness.
    Therefore, we find it interesting to focus on ternary constraints.
    Let's consider first 
    the natural ternary multiplication constraint $x\cdot y = z$.
    First, we notice that setting all variables to zero satisfies this constraint. If this is the only constraint then we have a polynomial time algorithm.
    Second, the ternary multiplication constraint $x\cdot y = z$ can be transformed to
    the linear constraint $\log x + \log y = \log z$~\cite{bodirsky2017constraint}, in case all variables are positive. 
    This trick can help in case the all-zero solution is not allowed or other unary constraints are introduced.
    Therefore, the multiplication constraint does not lead to \ER-hardness by itself.
    Furthermore, due to the logarithm trick, multiplication seems somewhat easier than other ternary constraints.
    
    It is plausible that this trick or similar tricks can only be applied
    to exceptional ternary constraints. 
    We leave it as an exciting open problem to determine which ternary constraints lead to \ER-complete \CCSP{}s.
    
    \paragraph{Arbitrary Constraints.} 
    We want to point out that our results only concern constraints coming from \wellbehaved functions, instead of allowing arbitrary constraints. 
    Such a restriction is necessary, since otherwise we could, for example, consider CSPs with a constraint that forces a variable to be an integer. 
    This would allow us to encode arbitrary Diophantine equations, making the problem undecidable.
    Even more Bodirsky and Grohe~\cite{bodirsky2008non} showed that \textit{any} algorithmic decision problem has an equally difficult CSP problem.
    As a consequence, any type of classification of continuous constraints must  limit the set of allowed constraints in some~way.

    \paragraph{Variable-Constraint Graph.} We have completely neglected the variable-constraint incidence graph in this paper.
    Previous work showed that this graph can be restricted, by self-reduction
    and a clever application of the addition constraint~\cite{AreasKleist,AnnaPreparation}.
    We are curious if it is possible to classify 
    hereditary graph classes for which \CCI is \ER-complete.

    \paragraph{Universality Results.} Previous reductions of \ER-completeness usually also imply 
    so-called universality results. 
    Giving a proper introduction to universality results is outside the scope of this paper.
    Universality results translate topological and algebraic phenomena from one type of CSP to another type.
    See the lecture notes by Matou{\v{s}}ek for some introduction to universality theorems in this context~\cite{matousek2014intersection}.
    Our methods do not seem to imply these types of universality results.
    Specifically, if $f$ is a complicated function that is not even a polynomial, 
    it seems implausible that $f$ can be used to construct, say, $\sqrt{2}$.
    
    \paragraph{Algebraic Derivatives.} 
    Given the applications that we are aware of, the most complicated part was to check that $f,g$ and their derivatives are rational at the origin. 
    
    We wonder whether it might be sufficient if the values of $f,g$ and its derivatives are algebraic at the origin. 
    This weaker condition might follow from some general argument that avoids computing $f,g$ and its derivatives.

    \paragraph{Constraints true outside of Domain.}
    We have defined the constraints \equalzero{f} and \largerzero{f} to be true outside of the domain of $f$.
    We use this formulation to make our results slightly easier to apply.
    The use case is as follows.
    Assume that we have some \CE instance $\Phi$ and we are building 
    an instance $I$ of some other type of, say geometric, problem from it.
    Assume that we are able to construct a gadget 
    representing a suitable \nonlinear and \wellbehaved function~$f$.
    We have to show that $\Phi$ is a yes instance if and only if $I$ is a yes instance.
    One direction is commonly easy.
    In case that $\Phi$ is  a yes instance then there must be some $x$ that satisfies 
    all the constraints. 
    Typically, the construction of $I$ together with $x$, directly shows that $I$ is also a yes instance.
    Sometimes, the reverse direction is more difficult.
    We have to show that if~$I$ is a yes-instance then $\Phi$ is one as well.
    The tricky part is that it is sometimes conceivable that $I$ has a solution,
    but that this solution could potentially leave the intended range.
    We still have to show that all constraints of $\Phi$ are satisfied.
    This is now trivial for the constraints \equalzero{f} and \largerzero{f},
    as they are defined to be always true outside of the domain of $f$.
    In other words, the way we defined the constraints derived from $f$ ensures
    that we do not need to worry about variables leaving their range
    when applying our results.

%%%%%%%%%%%%%%%%%%%%%%%%%%%%%%%%%%%%%%%%%%%%%%%%%%%%%%%%%%
\subsection{Alternative Descriptions}
\label{sub:Alternative}
%%%%%%%%%%%%%%%%%%%%%%%%%%%%%%%%%%%%%%%%%%%%%%%%%%%%%%%%%%
In this subsection, we want to make some comments that might make it easier to apply our results to CCSPs where the constraints are given in explicit form or by a parametrization.
Before we delve into technical details consider the following example.

\begin{sidefigure}[thbp]
    \centering
    \includegraphics{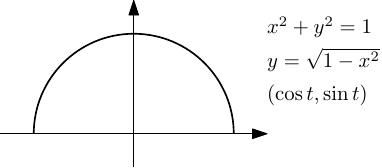}
    \caption{Three descriptions of the points on the semi-circle.}
    \label{fig:semi-circle}
\end{sidefigure}

\begin{example}
    Let $S = \set{(x,y)\in \R^2}{x^2+y^2 = 1}$ be the circle.
    We can define the function $f(x,y) = x^2+y^2 -1$,
    which describes the set $S$ by \set{(x,y)\in \R^2}{f(x,y)=0}.
    We can easily check that $f$ is 
    \wellbehaved 
    and \nonlinear and apply 
    Theorem~\ref{thm:Equality}.
    (Note that $f(0,0) \neq 0$, but this can be easily fixed by shifting $f$. We do not do this here to keep the notation simple.)
    
    In a different setting, we may know that $S$ is the graph of a function $h\colon [-1,1] \to \R$ given by $h(x) = \sqrt{1-x^2}$. 
    This would mean
    \[
    S = \set{(x,h(x)) \in \R \times \R}{x \in [-1,1]}\cup \set{(x,-h(x)) \in \R \times \R}{x \in [-1,1]}. 
    \]
    If we are given such a description, it is possible to 
    % go
    rewrite the condition $y = h(x)$ as $f(x,y) = 0$ for $f(x,y) = y-h(x)$, and we can check whether $f$ satisfies the necessary conditions. 
    Instead, it turns out we can more easily check the relevant conditions directly on $h$.

    Another description of the set $S$ could be by a parametrization $\gamma(t) = (\cos t, \sin t)$.
    With this it holds that $S=\set{\gamma(t)\in \R^2}{t \in [-\pi,\pi]}$.
    While we know $f$ in this specific case, in general, it is not so easy anymore to give an explicit description of $f$.
    We will give some conditions on $\gamma$ which can be checked to ensure that our theorems can be applied when the constraint has such a parametrized form.
\end{example}
In this section, we derive sufficient conditions to check that $f$ exists and is \wellbehaved and \nonlinear, even when we do not know how to describe $f$ explicitly.
These conditions are used in at least two applications~~\cite{etrPacking,FabianMapLabel}.
In this section, we ignore issues about \ER-membership as they are not so easy to handle and we believe that most readers care about the \ER-hardness part anyways.

\paragraph{Explicit Description.} 
We consider the case that the constraint $f(x,y) = 0$ is described by $y = h(x)$. 
\begin{definition}[\wellbehaved, \nonlinear]
     We say $h \colon I \to \R$ is \textit{\wellbehaved} if it satisfies the following conditions.
    \begin{itemize}
        \item $h$ is a $C^3$ function, with  $I\subset \R$ being an interval with $0$ in its interior.
        \item $h(0) = 0$, and $h'(0)$ and $h''(0)$ are rational.
    \end{itemize}

    \noindent We say $h$ is 
    \begin{itemize}
        \item \textbf{\nonlinear} if $h''(0) \neq 0$,
        \item \textbf{\convex} if $h''(0) > 0$, and
        \item \textbf{\concave} if $h''(0) < 0$.
    \end{itemize}
\end{definition}
Note that our definition of $h$ being \convex corresponds to $h$ being a convex function.
Note that we have now defined the terms \wellbehaved, \nonlinear etc both for $f$ as well as for $h$. 
The next lemma justifies this overload, as it shows that they also exactly correspond to one another.
   \begin{lemma}
    \label{lem:ExplixitRational}
    Let $h \colon I \to \R$ be a \wellbehaved function and
    $f \colon I\times \R \to \R$ be defined as $f(x, y) = y - h(x)$. 
    Then $f$ is \wellbehaved as well.
    Furthermore, any property (\nonlinear, \convex, \concave) that is satisfied by $h$ is also satisfied by $f$.
    \end{lemma}
\begin{proof}
    We check all conditions of \wellbehavedness one-by-one.
    We note that $\domain = I\times \R$ is indeed a neighborhood of the origin.
    As $h$ is $C^3$, so is $f$.
    It holds that $f(0,0) = 0 - h(0) = 0$.
    The derivatives have the following form:
    \[f_x = -h',\quad f_y = 1,\quad f_{xx} = -h'',\quad f_{xy} = 0,\quad f_{yy} = 0.\]
    Recall that by Young's theorem $f_{xy} = f_{yx}$.
    As $h(0)$, $h'(0)$, and $h''(0)$ are rational, so are the values of $f_x,f_y,f_{xx},f_{xy},f_{yy}$ when evaluated in the origin.
    We now compute 
    \[\kappa'(f) = 
    \left(f_y^2f_{xx} - 2f_xf_yf_{xy} + f_x^2f_{yy}\right)(0,0)
    = - h''(0) + 2h'(0) \cdot 0 + (-h'(0))^2 \cdot 0 = -h''(0).\]    
    Thus the properties are also in correspondence to one another.
\end{proof}

We are now ready to go to the parametrized description.

\paragraph{Parameterized Description.} 
We next consider the case that the set $S$ that describes the constraint is given by a parametrization $\gamma = (a, b) \colon I \to \R^2$.

\begin{definition}
We say a parametrization $\gamma = (a, b) \colon I \to \R^2$ is \textit{\wellbehaved} if it satisfies the following conditions.
\begin{itemize}
    \item  $\gamma$ is a $C^3$ parametrization, with  $I\subset \R$ being an interval with $0$ in its interior.
    \item $\gamma(t) = (0, 0)$ if and only if $t = 0$.
    \item The functions $a', b', a'', b''$ are all rational in $0$
        and $a'(t) > 0$, $\forall t \in I$.
\end{itemize}

\noindent
We define $\kappa'(\gamma) = a'' \cdot b' - b''\cdot a'(0)$.
We say $\gamma$ is 
    \begin{itemize}
        \item \textbf{\nonlinear} if $\kappa'(\gamma) \neq 0$,
        \item \textbf{\convex} if $\kappa'(\gamma) < 0$, and
        \item \textbf{\concave} if $\kappa'(\gamma) >0$.
    \end{itemize}

\end{definition}

Here, we made the assumptions that $a'(t)> 0, \forall t \in I$, so that we can apply the inverse function theorem globally.
This makes the notation easier and removes some pathological cases when $\gamma$ approaches the origin for larger $t$.
Intuitively, the condition states that $\gamma$ goes from left to right.
Note that if $a'(t)< 0, \forall t \in I$,
we can just replace the parametrization $\gamma(t)$ by $\gamma(-t)$.
It is also not such a strong assumption as 
$a'(0)\neq 0$ also implies 
$a'(t)\neq 0$  $\forall t\in J$ for some sufficiently small open interval $J$ containing zero.

As we will show later, if $\gamma$ is \wellbehaved then $\gamma$ describes the graph of a function $h(x) = y$.
See \Cref{fig:bad-curve} for a geometric illustration for the different cases that can occur if one of the conditions is dropped.
Specifically, we can describe $h(x)$ by $b(a^{-1}(x))$.

\begin{figure}[t]
    \centering
    \includegraphics{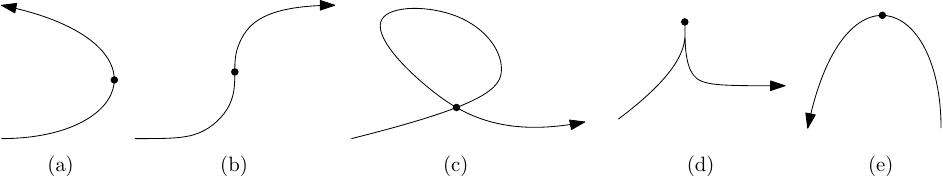}
    \caption{The curves displayed in (a) to (d) do not satisfy our conditions. In (a) and (b) the situation could be fixed by reversing the role of $x$ and $y$. In (c) the situation could be fixed by restricting the range of $t$. Although $\gamma$ is smooth in (d), the corresponding function $h = b \circ a^{-1}$ is not. In (e) it is the case that $a'(t)<0$. We can define a new parametrization $\gamma(-t)$, which satisfies now all conditions.}
    \label{fig:bad-curve}
\end{figure}

\begin{lemma}
    \label{lem:ParaRational}
    Let the parametrization $\gamma$ be \wellbehaved. Then there is an interval around the origin $J \subseteq \R$ and some \wellbehaved $f$ such that
    \[
    \set{\gamma(t) \in \R^2}{t \in I} = \set{(x, y) \in J\times \R}{f(x,y) = 0},
    \]
    and such that $\set{(x, y) \in J\times \R}{f(x,y) \geq 0}$ is exactly the set of points above the curve given by $\gamma$.
    
    Furthermore, it is possible to choose $f$ such that any property (\nonlinear, \convex, \concave) that is satisfied by $\gamma$ is also satisfied by $f$.
\end{lemma}

The proof will make use of a specific version of 
the inverse function theorem that 
we state here for the benefit of the reader.

\begin{theorem}[Inverse Function Theorem]
Let $I \subseteq \R$ be some interval containing $0$ in its interior, and let $a \colon I \to \R$ be a $C^3$-function, with $a(0) = 0$ and $a'(t) \neq 0$ for all $t \in I$.
In this situation, $a \colon I \to a(I)$ is invertible, and its inverse $a^{-1} \colon a(I) \to I$ is a $C^3$-function.
\end{theorem}
We want to point out that the function $a^{-1}$ 
often cannot be expressed in closed form. 
For instance, if $a(t) = t^5 - t - 1$ for $t \in [1,2]$, then $a'(t)$ is positive for all $t$, but $a^{-1}(0)$ does not admit a closed form expression~\cite{Wikipedia2023Galois}.
And thus it is also not so difficult to find examples of parametrizations for which we cannot find a closed form expression by some function $f$.
Therefore, it is really useful to have conditions on $\gamma$ that we can check instead of having to find~$f$.

\begin{proof}[Proof of Lemma~\ref{lem:ParaRational}]
    First we argue that $\gamma$ describes the graph of a function $h$.
    Recall that $\gamma$ consists of the two components $a$ and $b$, i.e., $\gamma(t) = (a(t),b(t))$.
    We note that all conditions of the inverse function theorem as stated above are satisfied for $a$, thus if we let $J = a(I)$, then there is inverse function $a^{-1} \colon J \rightarrow I$ that is a $C^3$ function.
    We now define 
    \[ h(x) = b(a^{-1}(x)),\]
    for all $x\in J$.
    Using the fact that $a^{-1}$ is an inverse of $a$, it follows that
    \[
    \set{\gamma(t) \in \R^2}{t \in I} = \set{(x, h(x)) \in \R^2}{x \in J} = \set{(x,y) \in J\times \R}{y = h(x)}.
    \]
    If we define $f(x,y) = y - h(x)$ for $(x,y) \in J \times \R$, it follows that
    \[
    \set{\gamma(t) \in \R^2}{t \in I} = \set{(x, y) \in J \times \R}{f(x,y) = 0}.
    \]
    Note that a point $(x,y)$ in $J \times \R$ lies above the curve given by $\gamma$ if and only if $y \geq h(x)$, which is equivalent to $f(x,y) \geq 0$. This proves the first part of the lemma.
    
    For proving the second part, by \Cref{lem:ExplixitRational}, it is sufficient to evaluate $h(x)=b(a^{-1}(x))$ and its derivatives at $x=0$.
    We start with
    \[
    h'(0) = \frac{b'(a^{-1}(0))}{a'(a^{-1}(0))} = \frac{b'(0)}{a'(0)}.
    \]
    We continue with
    \[
    h''(0) = \frac{b''(a^{-1}(0))a'(a^{-1}(0)) - a''(a^{-1}(0))b'(a^{-1}(0))}{[a'(a^{-1}(0))]^3}
    =\frac{b''(0)a'(0) - a''(0)b'(0)}{[a'(0)]^3}
    = -\frac{\kappa'(\gamma)}{[a'(0)]^3}.
    \]
    Note that, since $a'(0) > 0$, this implies that $h''(0)$ and $\kappa'(\gamma)$ have opposite signs.
    This finishes the proof.
\end{proof}

%%%%%%%%%%%%%%%%%%%%%%%%%%%%%%%%%%%%%%%%%%%%%%%%%%%%%%%%%%
\subsection{Proof Overview for \CE{} and \CCI}
\label{sub:overview}
%%%%%%%%%%%%%%%%%%%%%%%%%%%%%%%%%%%%%%%%%%%%%%%%%%%%%%%%%%
The proofs of \cref{thm:Equality} and \cref{thm:Inequality} follow several steps which we explain in this section. 
We start by explaining how \ER-hardness of \CE (\cref{thm:Equality}) can be proven, and then we say how this can be modified to prove the hardness of \CCI (\cref{thm:Inequality}). The structure of the proof is visualized in \cref{fig:overview}.

\paragraph{Notation.} 
In our proofs, newly introduced variables will often be denoted by using double square brackets, like this: $\var{f(x)}$, $\var{x+y}$, $\var{x^2}$, etc. 
In this notation, formally the whole expression including the brackets and the symbols within it should be understood as the name of the variable, without any special meaning. 
The symbols within the brackets will usually denote the value which is intuitively represented by the variable.

\begin{figure}[tbph]
    \centering
    \includegraphics[page =2]{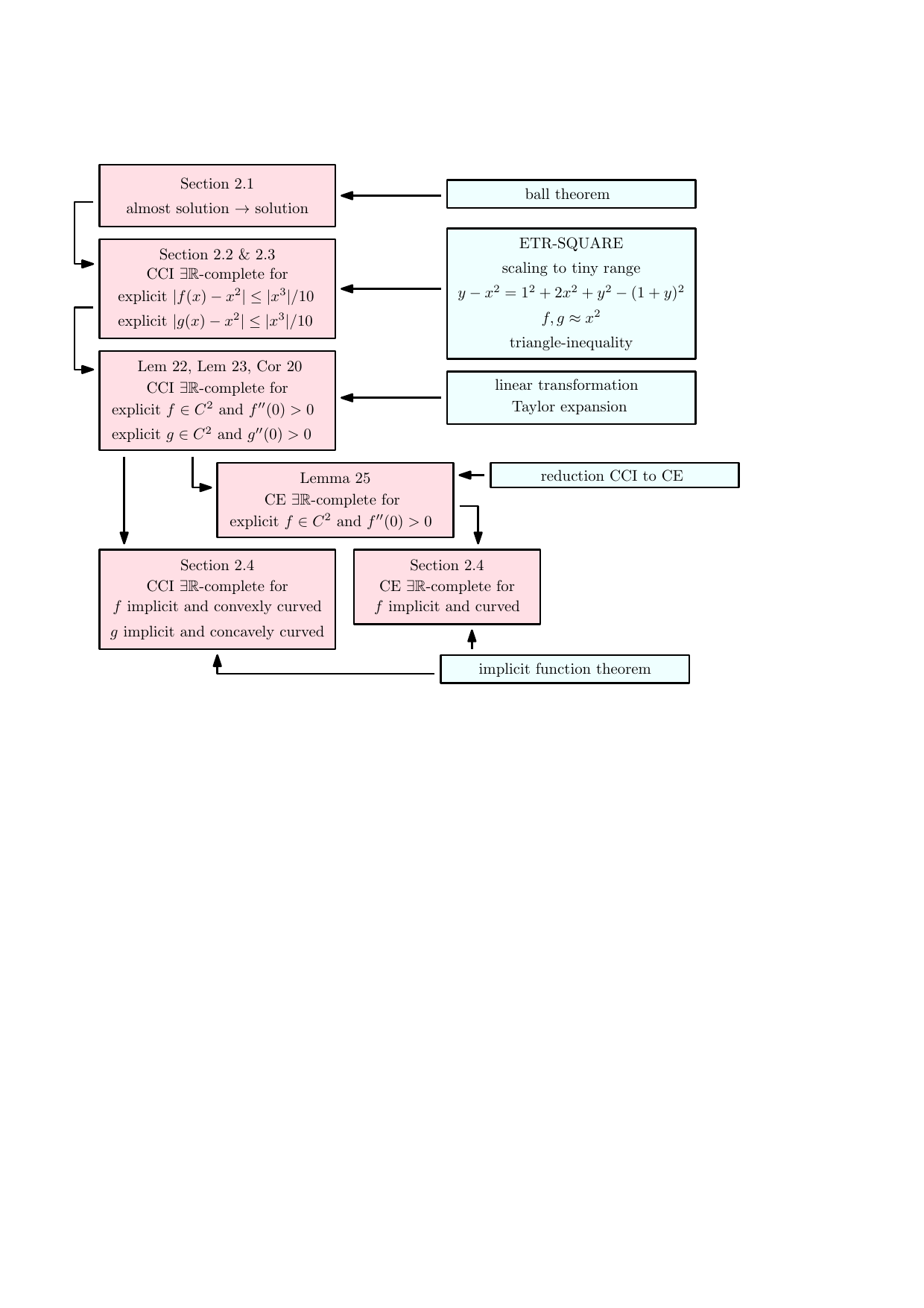}
    \caption{A formal overview of the different steps of the proof to \Cref{thm:Equality} and \Cref{thm:Inequality}.
    }
    \label{fig:overview}
\end{figure}

\paragraph{Ball Theorem.}
 One of the most important tools that we employ is a lemma
 from real algebraic geometry~\cite{basu2010bounding}.
 It states that for every ETR-formula $\Phi$ there is a ball
 $B$ whose radius only depends on the length~$L$ of $\Phi$, such that the following property is satisfied:
 if $\Phi$ has at least one solution~$x$ then 
 there must be also a solution $y$ inside the  ball~$B$.
 This theorem tells us that solutions cannot get too large.
 To get an intuition, consider the system of equations
 $x_0 = 2, x_{i+1} = x_i^2$, for $i = 0,1,\ldots,n-1$. 
 Clearly, $x_n = 2^{2^n}$, which is 
 doubly exponentially large.
 The ball theorem essentially states that we cannot get much larger numbers.

\paragraph{Range.}
    To introduce range constraints is common practice and 
    we inherit them from previous work~\cite{abrahamsen2019dynamic, etrPacking}.
    We repeat here the argument, for the benefit of the reader.
    In order to restrict the range of every variable, 
    we first note that the ball theorem already tells us that
    the range of each variable may be limited by some number~$r$.
    We construct $\eps = \delta / r$ and replace
    every variable $x$ by $\var{\eps x} = \eps \cdot x$ and
    consequently we need to adapt all constraints.
    For instance $x \cdot y = z$ becomes
    $ \var{\eps x} \cdot \var{\eps y}  = \var{\eps z} \eps $.
    In this way, we can easily ensure that if there is a solution at all
    then there is at least one solution with all variables in the 
    range $[-\delta, \delta]$. 
    
    We will make use of this re-scaling trick to 
    place all variables in an even smaller range close to zero, as the behavior of~$f$ and~$g$
    is better understood close to the origin.
    Specifically, the error $|f(x) - x^2| \leq \eps^3$
    is small enough to pretend that $f$ behaves like a squaring function.

\paragraph{Approximate Solution.}
    Using the ball theorem, we will establish that
    equality constraints of the form
    $p(x) = 0$ can be slightly weakened to $|p(x)| \leq \eps$
    for some sufficiently small $\eps$.
    To get an intuition consider the following highly simplified cases.
    
    Assume we are given a polynomial equation
    $p(x) = 0$, with $p\in \Z[X_1,\ldots,X_n]$ 
    and we are looking for a solution $x\in \Z^{n}$.
    Then in particular, we know that for all 
    $x\in \Z^{n}$ that $p(x) \in \Z$.
    This readily implies that we can equivalently ask for some $x\in \Z^n$
    that satisfies $| p(x)| \leq \tfrac{1}{2}$.
    Now, this is trivial for integers as integers have distance at least one to each other.
    But we can generalize the same principle also 
    to rational and algebraic numbers.

    Let $ S = \{ \frac{a_1}{b_1}, \ldots, \frac{a_1}{b_1}\} $ be $n$ rational numbers
    with $|a_i|,|b_i| \leq L$.
    Thus it is easy to see that $q,r \in S$ have minimum distance $\frac{1}{L^2}$.
    This implies that if $ |q - r| \leq  \frac{1}{L^2} $, for some $q,r \in S$,
    we can infer that $q=r$.
    Again, this may seem almost trivial, but relies on the simple fact
    that rational numbers with bounded numerator and denominator have a minimum distance to one another.

    \Cref{lemma:approx} generalizes the idea to algebraic numbers.
    Using the ball theorem, we will establish that algebraic numbers also have some minimum distance to one another, 
    if we restrict their bit-complexity.

\paragraph{\etrsquare .}
    We use a theorem by Abrahamsen and Miltzow that shows that \etrsquare
    is \ER-complete~\cite{abrahamsen2019dynamic}. 
    In this variant, we essentially have only addition ($x+y = z$) and squaring constraints ($x^2 = y$).
    Furthermore, the range of each variable is restricted to a small range around zero.
    For the sake of completeness and readability, we present a self-contained proof in \Cref{sec:etr-square}.
    
\paragraph{Explicit.}
    Given those tools, we can show that 
    we can replace a squaring constraint with explicit constraints ($f(x) = y$).
    We start by only considering $f$ which satisfy 
    \begin{equation}
      | f(x) - x^2 | \leq \frac{1}{10} x^{3}.  \label[cond]{Cond:AlmostSquare}
    \end{equation}
    The idea of the reduction from \etrsquare is simple but tedious.
    We can rewrite the constraint $x^2 = y$ as a linear combination of squares 
    as follows \[1^2 + 2 x^2 +  y^2 - (1 + y)^2 = 0.\]
    Now, we can replace each square using the function $f$
    to $f(1) + 2 f(x) +  f(y) - f(1 + y) = 0$.
    As~$f$ is approximately squaring, this implies
    that we are approximately enforcing the constraint~$x^2 = y $.
    In other words, we enforce 
    $|x^2 - y| \leq \eps$.
    Note that this is the technically most tedious step to make rigorous as we will later see.
    As we have discussed above it is sufficient to enforce each constraint approximately. 
    The technical difficulty is many-fold.
    We need to work with scaled variables, instead of the original variables.
    Furthermore, we have to take into consideration that 
    when we construct~$\eps$ that this also makes the formula longer.
    In particular, this means that the definition of~ $\eps$ cannot depend 
    on the newly constructed instance, but has to depend on the original instance.
    
    Using linear transformations and Taylor expansion on $f$, we can 
    replace \Cref{Cond:AlmostSquare} relatively easily  by \Cref{cond:diffbar}:
    
    \begin{equation}
        f \textrm{ is three times differentiable and } f''(0) > 0. \label[cond]{cond:diffbar}
    \end{equation}

    \begin{sidefigure}[tbhp]
        \centering
        \includegraphics{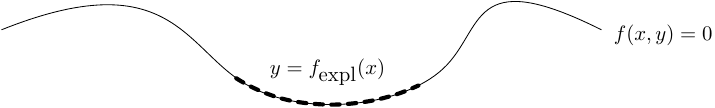}
        \caption{The implicit function theorem tells us that there is
        an function $f_\expl$ such that the curve $y = f_\expl(x)$  is locally identical to the curve $f(x,y) = 0$.}
        \label{fig:ImplicitFunction}
    \end{sidefigure}
    
\paragraph{Implicit.}
    We are now ready to handle the more general case of constraints in implicit
    form ($f(x,y) = 0$).
    The implicit function theorem tells us that there is
    a function $f_\expl$ such that the curve $y = f_\expl(x)$  
    is locally identical to the curve $f(x,y) = 0$, see \Cref{fig:ImplicitFunction}.
    The properties of the partial derivatives of~$f$ translate to properties of the partial derivatives of $f_\expl$.
    In this way, we can infer hardness of the CSP with constraint $f(x, y) = 0$ from the problem with constraint~$y = f_\expl(x)$.

\paragraph{Inequalities.}
    Until this point, we discussed the hardness proof of \CE. In \CCI, we instead have inequality constraints.
    The case of inequalities goes analogously to the equality case.
    We need one \convex and one \concave inequality.
    Whenever we want to upper bound an expression, we use 
    one inequality and whenever we need to lower bound something,
    we use the other one.
    While on the surface this is not so difficult, 
    it makes the reduction from \etrsquare to \CCI considerably more tedious.
    Specifically, it makes it harder to have an intuition
    on several technical steps and the meaning of several
    intermediate variables.
  
  %\paragraph{Acknowledgment.}

%%%%%%%%%%%%%%%%%%%%%%%%%%%%%%%%%%%%%%%%
\section{ETR-Square} 
\label{sec:etr-square}
%%%%%%%%%%%%%%%%%%%%%%%%%%%%%%%%%%%%%%%%

This section is dedicated to showing that ETR can be reduced to \etrsquare.
We execute all steps of the reduction in great detail for the sake of completeness.
This section is largely copied from
the paper by Abrahamsen and Miltzow~\cite{abrahamsen2019dynamic}. 
We mainly simplified the proofs, as we only show \ER-completeness and we leave out the parts that were needed to preserve topological or algebraic properties. Note that large parts in this section can be considered folklore.
Similar reductions have been described by Schaefer and \v{S}tefankovi\v{c}~\cite{Schaefer-ETR}.
We say two \etr formulas are \textit{\equisatisfiable} if they have the same truth value.
We say that a formula $\Phi$ is true if $V(\Phi)$ is non-empty.

%%%%%%%%%%%%%%%%%%%%%%%%%%%%%%%%%%%
\subsection{Reduction to Conjunctive Form}
\label{sec:Conjunction}
%%%%%%%%%%%%%%%%%%%%%%%%%%%%%%%%%%%

\begin{definition}
An \etrconjunction formula 
$\Phi=\Phi(x_1,\ldots,x_n)$ 
is a conjunction $C_1\land\ldots\land C_m$,
where $m\geq 0$ and each $C_i$ is of one of the two forms
\begin{align*}
x\geq 0,\quad p(y_1,\ldots,y_l) = 0
\end{align*}
for $x,y_1,\ldots,y_l \in \{x_1, \ldots, x_n\}$ and $p$ a polynomial.

In the problem \etrconjunction, we are given an \etrconjunction formula $\Phi$.
The goal is to decide if~$V(\Phi)$ is non-empty.
\end{definition}

We want to point out that \etrconjunction is a CSP, but it has an infinite set of possible constraints, one for each polynomial. 
Thus, we find it inconvenient to use the language of CSPs here. Furthermore, we allow polynomials represented by any well-defined term built from the symbols $\{0, 1, x_1, \ldots, x_n, +, -, \cdot\}$ and brackets, (e.g.~$p = (x+1)(x-1) + x$), and not just those in standard form (e.g.~$p = x^2+x -1$). 
Note that since there are no strict inequalities in a formula~$\Phi$ in \etrconjunction, the set $V(\Phi)$ is closed.
We show how to reduce a general \etr formula
to an \etrconjunction formula.

\begin{lemma}
\label{lem:Reduction-ETR-Conjunction}
  Given an \etr formula $\Phi$, we can compute  in linear time an \equisatisfiable \etrconjunction formula~$\Psi$.
\end{lemma}
\begin{proof}
 We start with an \etr formula $\Phi$
 and modify it repeatedly to attain an \etrconjunction formula
 $\Psi$. 
 Each modification leads
 to an \equisatisfiable formula.
 Our modifications can be summarized in four steps.
 (1) Delete ``$\lnot$''. 
 (2) Delete ``$>$''.
 %(3) Move ``$\geq$'' to variables only.
 (3) Delete ``$\geq$''.
 (4) Delete ``$\lor$''.
 In the rest of this proof, $p$ and $q$ denote polynomials.

 Step (1): Here, we merely ``pull'' every negation $\lnot$
 in front of every atomic predicate. 
 For instance 
 $\lnot(A\lor B\lor C)$ becomes 
 $(\lnot A\land \lnot  B\land \lnot C)$.
 To see that this can be done in linear time,
 note that the length of $\Phi$ is at least
 the number of atomic predicates.
 At the end of this process, every atomic predicate
 is preceded by either a negation or not. 
 It may be that $\land$ and $\lor$ symbols are swapped,
 but each is counted as one symbol. 
 
 Thereafter each 
 atomic predicate preceded by $\lnot$ is replaced as follows:
 \begin{align*}
  \lnot (q > 0) \quad &\mapsto \quad -q \geq 0\\
  \lnot (q = 0) \quad &\mapsto \quad (q > 0) \,  \lor \,  (-q > 0) \\
  \lnot (q \geq 0)\quad  &\mapsto \quad -q > 0
 \end{align*}
 Those replacements are done repeatedly until there
 are no occurrences of ``$\lnot$'' left in the formula.
 
 Step (2): We replace each 
 {strict} inequality as follows:
 \[q > 0 \quad \mapsto  \quad (q\cdot y\cdot y-1 = 0),\]
 where $y$ is a new variable.
 Those replacements are done repeatedly till there
 are no occurrences of ``$>$'' left in the formula.
 
 Step (3): We replace all atomic predicates
 of the form 
 $q \geq 0$
 by the predicate
 $q - z^2 = 0$, where $z$ denotes a new variable.

 Step~(4): We delete disjunctions as follows.
 It will also be necessary to replace some conjunctions.
 Let $\Phi$ be the formula
 after Step~(1)--(3).
 Let $\Psi$ be an, initially empty, \etrconjunction formula. In this step, we will describe an algorithm to repeatedly modify $\Phi$ and $\Psi$ in such a way that $\Phi \land \Psi$ stays \equisatisfiable to the initial value of $\Phi$. We will continue these modifications until $\Phi$ consists of just a single equation of the form $p = 0$.

 While $\Phi$ is not of this form, it either contains a disjunction of the form $p = 0 \lor q = 0$, or a conjunction of the form $p = 0 \land q = 0$. The disjunction $p = 0 \lor q = 0$ we can replace by a single equation $p\cdot q = 0$. For a conjunction $p = 0 \land q = 0$, we add new variables $x, y$ and replace it in $\Phi$ by $x \cdot x + y \cdot y = 0$, while we also replace $\Psi$ by $\Psi \land (p-x = 0) \land (q-y = 0)$. Note that in each step, the number of atomic formulas in $\Phi$ is reduced by 1, so we know that the reduction terminates in a linear number of steps. 

 When $\Phi$ consists of just a single polynomial equation, we can replace $\Psi$ by $\Psi \land \Phi$, and we conclude that $\Psi$ is an \etrconjunction formula which is \equisatisfiable to the original $\Phi$.

 At first, it might seem easier %in Case~(ii) 
 to replace
 $p =0 \land q = 0$ by $p\cdot p+q\cdot q = 0$.
 However, we want our reduction to be linear and
 the simplified step could, if done repeatedly,
 lead to very long formulas.
 With the replacement rules we have suggested, the length of the formula increases by at most a constant factor.
 This reduction takes linear time and the
 final formula $\Psi$ is an \etrconjunction formula.
\end{proof}

%%%%%%%%%%%%%%%%%%%%%%%%%%%%%%%%%%%
\subsection{Reduction to Compact Semi-Algebraic Sets}
\label{sec:CompactSets}
%%%%%%%%%%%%%%%%%%%%%%%%%%%%%%%%%%%

In this section, we show the hardness of \etrcompact. 
In that variant, we are promised that the solution space is compact.
To do so, we employ a theorem that states that every solution space is either empty or intersects a large ball.

\begin{definition}
In the problem \etrcompact, we are given an \etrconjunction formula $\Phi$ with the promise that $V(\Phi)$ is compact.
The goal is to decide if $V(\Phi)$ is non-empty.
\end{definition}

We need a tool from 
real algebraic geometry.
The following corollary has been pointed out 
by Schaefer and \v{S}tefankovi\v{c}~\cite{Schaefer-ETR} in a simplified form.
We always use logarithms with base two.

\begin{corollary}
[Basu, Roy~\cite{basu2010bounding} Theorem~2]
\label{cor:BallIntersect}
Let $\Phi$ be an instance of \etr of length $L\geq 4$ such that $V(\Phi)$ is a non-empty subset of $\R^n$.
Let $B$ be the set of points in $\R^n$ at distance at most~$2^{L^{8n}} = 2^{2^{8n\log L}}$ from the origin.
Then $B\cap V(\Phi)\neq\emptyset$.
\end{corollary}

\begin{lemma}
\label{lem:Reduction-Conjunction-Compact}
There is a reduction from \etrconjunction to \etrcompact in $O(L\log L)$ time, where $L$ is the length of the formula.
\end{lemma}
\begin{proof}
 Let an instance $\Phi$ of \etrconjunction be given and define $k = \lceil 8n\log L\rceil$.
 To make an \equisatisfiable formula $\Psi$ such that $V(\Psi)$ is compact, we start by including all the variables and constraints of $\Phi$ in $\Psi$.
We then construct the variables $\var{2^{2^0}},\ldots,\var{2^{2^k}}$, which will always take the values $2^{2^{0}},\ldots, 2^{2^k}$ respectively. We use repeated squaring as follows.
 \begin{align*}
  \var{2^{2^0}} -1-1 &= 0 \\
  \var{2^{2^1}} - \var{2^{2^0}}\cdot \var{2^{2^0}} &= 0 \\
		\vdots & \\
	  \var{2^{2^k}} - \var{2^{2^{k-1}}}\cdot \var{2^{2^{k-1}}} &= 0
 \end{align*}
	For each variable $x$ of $\Phi$, we now introduce the variables $\var{x+2^{2^k}}$ and  $\var{2^{2^k}-x}$
	and the constraints
	\begin{align*}
		\var{x+2^{2^k}}-x-\var{2^{2^k}} & =0 \\
		\var{x+2^{2^k}} & \geq 0 \\
		\var{2^{2^k}-x}-\var{2^{2^k}}+x & =0 \\
		\var{2^{2^k}-x} & \geq 0.
	\end{align*}
Note that this corresponds to introducing the constraint $-2^{2^k}\leq x\leq 2^{2^k}$ in $\Psi$.

Observe that the ball $B$ centered around the origin with radius $2^{2^k}$ is contained in the cube~$\left[-2^{2^k}, 2^{2^k}\right]^n$, so $B \cap V(\Phi) \subseteq V(\Psi)$. It now follows by Corollary~\ref{cor:BallIntersect} that
 $$V(\Phi)\neq\emptyset\Leftrightarrow V(\Psi)\neq\emptyset.$$
 Note that $V(\Psi)$ is compact since $\Psi$ contains no strict inequalities and each variable is bounded.
 This finishes the proof.
\end{proof}

%%%%%%%%%%%%%%%%%%%%%%%%%%%%%%%%%%%
\subsection{Reduction to \etrami}
\label{sec:ami}
%%%%%%%%%%%%%%%%%%%%%%%%%%%%%%%%%%%
In this section we will show \ER-hardness of the problem \etrami.
The term \etrami is an abbreviation for
\textit{{\bf E}xistential {\bf T}heory of the {\bf R}eals with
{\bf A}ddition, {\bf M}ultiplication, and {\bf I}nequalities}.

\begin{definition}[\etrami]
\label{def:etrami}
{We define the set of constraints $C_{\mathrm{AMI}}$ as
\[
C_{\mathrm{AMI}} = \left\{ x + y = z, \; x \cdot y = z, \; x \geq 0, \; x = 1 \right\}
\]
Now we define the \etrami problem as the CCSP given by $C_{\mathrm{AMI}}$.}
\end{definition}

\begin{lemma}[\etrami Reduction]
\label{lem:Reduction-AMI-INV}
Given an instance of \etrcompact defined by a formula $\Phi$, we can in $O(|\Phi|)$ time construct
an  \equisatisfiable \etrami formula $\Psi$ such that $V(\Psi)$ is compact. 
\end{lemma}
\begin{proof}
 Recall that $\Phi$ is a conjunction of atomic formulas of the form $p=0$ for a polynomial~$p$ and $x\geq 0$ for a variable $x$.
 Each polynomial $p$ may contain minuses, zeros, and ones.
 The reduction has four steps.
 In each step, we make changes to $\Phi$.
 In the end, $\Phi$ has become a formula $\Psi$ with the desired properties.
 In step~(1)--(3), we remove unwanted ones, zeros, and minuses by replacing them with constants.
 In step~(4), we eliminate complicated polynomials.

 Step~(1):
 We introduce the constant variable $\var{1}$ and the constraint $\var{1}=1$ to~$\Phi$.
 We then replace all appearances of $1$ with $\var{1}$ in the atomic formulas of the form $p=0$.
 
 Step~(2):
 We introduce the constant variable $\var{0}$ and the constraint $\var{1}+\var{0}=\var{1}$ to~$\Phi$.
 We then replace all appearances of $0$ with $\var{0}$ except in the constraints of the form $x\geq 0$.
 
 Step~(3):
 We introduce the constant variable $\var{-1}$ and the constraint $\var{1}+\var{-1}=\var{0}$ to~$\Phi$.
 We then replace all appearances of minus with a multiplication by $\var{-1}$ in $\Phi$.
 
 Step~(4):
 We replace bottom-up every occurrence 
 of multiplication and addition by a new variable
 and an extra addition or multiplication constraint.
 Here are two examples of such replacements:

 \begin{align*}
  (x_1 + x_2\cdot x_4 + x_5)\cdot x_6 = \var{0} \quad & \mapsto\quad (x_1 + z_1 +x_5)\cdot x_6 = \var{0}\land z_1=x_2\cdot x_4 \\
  (x_1 + z_1 + x_5)\cdot x_6 = \var{0} \quad & \mapsto\quad (z_2 +x_5)\cdot x_6 = \var{0}\land z_2=x_1+ z_1.
 \end{align*}

 In this way, every atomic predicate is eventually transformed to atomic predicates of \etrami or is of the form $x=\var{0}$.
 In the latter case, we replace $x=\var{0}$ by $x+\var{0}=\var{0}$.
 
 To see that the reduction is linear, note that
 every replacement adds a constant to 
 the length of the formula.
 Furthermore, at most linearly many 
 replacements will be done.
 All the above steps preserve the truth value of the formula and the compactness of the solution set.
\end{proof}

%%%%%%%%%%%%%%%%%%%%%%%%%%%%%%%%%%%%%%%
\subsection{Reduction to \etrsmall}
\label{sec:SMALL}
%%%%%%%%%%%%%%%%%%%%%%%%%%%%%%%%%%%%%%%
In this section, we show the hardness of \etrsmall, as defined below. 
The reduction works in two steps.
In the first step, we create a very small number using
repeated squaring and in the second step, 
we scale every variable to be in the correct range.
{\begin{definition}[\etrsmall]
We define the set of constraints $C_{\mathrm{SMALL}}$ as
\[
C_{\mathrm{SMALL}} = \left\{ x + y = z, \; x \cdot y = z, \; x \geq 0, \; x = \tfrac{1}{2} \right\}
\]
Now we define the \etrsmall problem as the CCSP given by $C_{\mathrm{SMALL}}$. Furthermore, {for every instance $\Phi$} we are promised that $V(\Phi) \subseteq \left[-\frac{1}{2}, \frac{1}{2}\right]^n$.
\end{definition}}

We are going to present a reduction from the problem \etrami to \etrsmall.
As a preparation, we present another tool from
real algebraic geometry.
Schaefer~\cite{schaefer2013realizability} made the following simplification of a result from~\cite{basu2010bounding}, which we will use.
More refined statements can be found in~\cite{basu2010bounding}.

\begin{corollary}[\cite{basu2010bounding}]
\label{cor:BallContain}
If a bounded semi-algebraic set in $\R^n$ has bit-complexity at most
$L \geq 5n$, then all its points have distance at most $2^{2^{L+5}}$
from the origin.
\end{corollary}

\begin{lemma}[\etrsmall Reduction]
\label{lem:Reduction-AMI-SMALL}
  Given an \etrami formula $\Phi$ such that $V(\Phi)$ is compact, we can in $O(|\Phi|)$ time construct an  \equisatisfiable instance of \etrsmall.
\end{lemma}

\begin{proof}
Let $\Phi$ be an instance of $\etrami$ with $n$ variables $x_1,\ldots,x_n$.
We construct an instance~$\Psi$ of $\etrsmall$. 

We set $\eps =  2^{-2^{L+6}}$, where $L=|\Phi|$. 
In $\Psi$, we first define a constant variable $\var{\eps}$. 
This is obtained by exponentiation by squaring, using $O(L)$ 
new constant variables and constraints.
We first define $\var{0}$, and $\var{2^{-2^0}}$, i.e.~$1/2$, by the equations
\begin{align*}
\var{2^{-2^0}} & = \tfrac{1}{2} \\
\var{0}+\var{2^{-2^0}} & = \var{2^{-2^0}} \\
\end{align*}
We then use the following equations for all $i\in\{0,\ldots,L+5\}$,
\begin{align*}
\var{2^{-2^{i}}}\cdot\var{2^{-2^{i}}} & =\var{2^{-2^{i+1}}} \\
\end{align*}
Finally, we define $\var{\eps}$ by the constraint $\var{\eps}+\var{0}=\var{2^{-2^{L+6}}}$.

In $\Psi$, we use the variables $\var{\eps x_1},\ldots,\var{\eps x_n}$ instead of
$x_1,\ldots,x_n$.
An equation of $\Phi$ of the form $x=1$ is transformed to the equation
$\var{\eps x} + \var{0}=\var{\eps}$ in $\Psi$.
An equation of $\Phi$ of the form $x+y=z$ is transformed to the equation
$\var{\eps x} + \var{\eps y} = \var{\eps z}$ of $\Psi$.
For an equation of $\Phi$ of the form $x\cdot y=z$, we also introduce a variable $\var{\eps^2 z}$ of $\Psi$ and the equations
\begin{align*}
\var{\eps x} \cdot \var{\eps y} & = \var{\eps^2 z} \\
\var{\eps}\cdot \var{\eps z} & = \var{\eps^2 z}.
\end{align*}
At last, constraints of the form 
$x \geq 0$ become $\var{\eps x} \geq 0$.

We now describe a function $f : V(\Phi) \rightarrow V(\Psi)$ in order to show that $\Phi$ and $\Psi$ are  \equisatisfiable.
Let $\mathbf x= (x_1,\ldots,x_n)\in V(\Phi)$.
In order to define $f$, it suffices to specify the values of the variables of $\Psi$ depending on $\mathbf x$.
 For all the constant variables $\var{2^{-2^0}},\var{2^{-2^1}},\var{2^{-2^2}},\ldots$, we define them in the natural way as $\var{2^{-2^i}} = 2^{-2^i}$ and $\var{\eps} = 2^{-2^{L+6}}$.
For all $i\in\{1,\ldots,n\}$, we now define $\var{\eps x_i} = \eps x_i$ and (when $\var{\eps^2 x_i}$ appears in $\Psi$) $\var{\eps^2 x_i} = \eps^2 x_i$.
Since $\mathbf x$ is a solution to $\Phi$, it follows from the constraints of $\Psi$ that these assignments are a solution to $\Psi$.

We need to verify that $\Psi$ defines an \etrsmall problem, i.e., that $\Psi$ satisfies the promise that $V(\Psi)\subset [-\tfrac{1}{2},\tfrac{1}{2}]^m$, where $m$ is the number of variables of $\Psi$.
To this end, consider an assignment of the variables of $\Psi$ that satisfies all the constraints.
Note first that the constant variables are non-negative and at most $\tfrac{1}{2}$.
For the other variables, we consider the inverse~$f^{-1}$, which is given by the assignment $x_i= \var{\eps x_i}/\eps$ for all $i\in\{1,\ldots,n\}$.
It follows that this yields a solution to~$\Phi$.
Since $V(\Phi)$ is compact, it follows from Corollary~\ref{cor:BallContain} that $|\var{\eps x_i}/\eps|\leq 2^{2^{L+5}}$.
Hence $|\var{\eps x_i}|\leq \eps\cdot 2^{2^{L+5}}= 2^{-2^{L+6}}\cdot 2^{2^{L+5}}\leq \tfrac{1}{2}$.
Similarly, when $\var{\eps^2 x_i}$ is a variable of $\Psi$, we get~$|\var{\eps^2 x_i}|\leq \eps<\tfrac{1}{2}$.

By the existence of $f$ and $f^{-1}$, we have now established that $V(\Phi) \neq \emptyset$ if and only if~$V(\Phi) \neq \emptyset$. In other words, $\Phi$ and $\Psi$ are \equisatisfiable.
The length of $\Psi$ is $O(L)$ longer than the length of $\Phi$, and $\Psi$ can thus be computed in $O(|\Phi|)$ time.
\end{proof}

\subsection{Reduction to \etrsquare}
From here, we can prove the hardness of the following problem:
\begin{definition}[\etrsquare]
We define the set of constraints $C_{\mathrm{SQUARE}}$ as
\[
C_{\mathrm{SQUARE}} = \left\{x+y = z, \; y = x^2, \; x \geq 0, \; x = 1\right\}.
\]
Now we define the \etrsquare problem as the CCSP given by $C_{\mathrm{SQUARE}}$. Furthermore, {for every instance $\Phi$} we are promised that $V(\Phi) \subseteq [-1, 1]^n$.
\end{definition}

\begin{lemma}[\etrsquare Reduction]
\label{lem:Reduction-SMALL-Square}
  Given an instance $\Phi$  of \etrsmall,
we can in $O(|\Phi|)$ time construct an  \equisatisfiable instance of \etrsquare.
\end{lemma}

\begin{proof}
We start with an \etrsmall instance $\Phi$. To this instance we add a variable $\var{1}$ and a constraint $\var{1} = 1$. Next we replace every constraint of the form $x = \frac{1}{2}$ by a constraint $x + x = \var{1}$. Finally, for every constraint of the form $x \cdot y = z$, we introduce the following new variables:
\[
\var{x^2}, \var{y^2}, \var{x+y}, \var{(x+y)^2}, \var{x^2+2xy}, \var{2xy}, \var{xy}
\]
and we add the following constraints:
\begin{align*}
\var{x^2} &= x^2\\
\var{y^2} &= y^2\\
\var{x+y} &= x+y\\
\var{(x+y)^2} &= \var{x+y}^2\\
\var{(x+y)^2} &= \var{x^2+2xy} + \var{y^2}\\
\var{x^2+2xy} &= \var{x^2} + \var{2xy}\\
\var{2xy} &= \var{xy} + \var{xy}\\
\var{xy} &= z.
\end{align*}
Every constraint of the form $x + y = z$ or $x \geq 0$ is not changed. After performing all these changes, which only needs linear time, we have an \etrsquare formula $\Psi$. Furthermore, it can be checked that every solution of this \etrsquare formula corresponds uniquely to a solution of the original \etrsmall formula. Also the fact that $V(\Phi) \subseteq [-1/2, 1/2]^n$ can be seen to imply that $V(\Psi) \subseteq [1, 1]^m$, where $m$ is the number of variables in $\Psi$. 
This proves that $\Psi$ is an \etrsquare instance which is \equisatisfiable to the original \etrsmall instance. Therefore the reduction is valid.
\end{proof}

%%%%%%%%%%%%%%%%%%%%%%%%%%%%%%%%%%%%%%%%
\section{Proof of CCSP-Theorems} 
\label{sec:main}
%%%%%%%%%%%%%%%%%%%%%%%%%%%%%%%%%%%%%%%%
In this section, we will prove \cref{thm:Equality} and  \Cref{thm:Inequality}. 

%%%%%%%%%%%%%%%%%%%%%%%%%%%%%%%%%%%%%%
\subsection{Approximate Solutions}
\label{sub:ApproximateSolutions}
%%%%%%%%%%%%%%%%%%%%%%%%%%%%%%%%%%%%%%

We start by proving a lemma which plays an important role in the proofs of \cref{thm:Equality} and  \Cref{thm:Inequality}.
 The following lemma intuitively states the following: if an \etrsquare formula~$\Phi$ has something which is ``almost a solution'', with an error of at most $2^{-2^{O(|\Phi|)}}$, then $\Phi$ also admits an actual solution. Similar results were established in~\cite{Argyrios2022_AppriximatinETR}.
\begin{definition}
Let $\Phi = \Phi(x_1, \ldots, x_n)$ be an \etrsquare formula such that $V(\Phi) \subseteq [-1, 1]^n$. For $\eps \geq 0$, define $\Phi_\eps$ as the formula where every constraint of the form $y = x^2$ is replaced by a constraint of the form $|y-x^2| \leq \eps$, and where constraints $-1 \leq x \leq 1$ are added for every $x \in \{x_1, \ldots, x_n\}$.
\end{definition}

\begin{lemma} 
\label{lemma:approx}
Let $\Phi = \Phi(x_1, \ldots, x_n)$ be an \etrsquare formula such that $V(\Phi) \subseteq [-1, 1]^n$. 
Now there exists an $M \in \Z$ with $M = O(|\Phi|)$ and $\eps = 2^{-2^M}$, 
such that $\Phi$ and $\Phi_\eps$ are \equisatisfiable.
\end{lemma}

We use the following result from Schaefer and \v{S}tefankovi\v{c} \cite{Schaefer-ETR}, see also~\cite{jeronimo2010minimum}:
\begin{corollary}[Corollary 3.4 from \cite{Schaefer-ETR}] \label{cor:pos_dist}
If two semi-algebraic sets in $\R^n$ each of bit-complexity at most $L \geq 5n$ have positive distance (for example, if they are disjoint and compact), then that distance is at least $2^{-2^{L+5}}$.
\end{corollary}
Here, the distance between two subsets $X, Y \subseteq \R^n$ is defined as $\inf_{x\in X, y \in Y}d(x, y)$. Note that in the case where $X$ and $Y$ are compact, the infimum in this definition may be replaced by a minimum.
\begin{proof}[Proof of Lemma~\ref{lemma:approx}]
First note that for any $\eps \geq 0$, we have $V(\Phi) \subseteq V(\Phi_\eps)$. In particular, if $V(\Phi)$ is nonempty, then also any $V(\Phi_\eps)$ is nonempty. For the rest of the proof, we will assume that $V(\Phi)$ is empty, and construct $M$ and $\eps = 2^{-2^M}$ such that $V(\Phi_\eps)$ is also guaranteed to be empty.

Suppose that $\Phi$ has $n$ variables, and contains $r$ constraints of the form $y = x^2$. For every $\varepsilon \geq 0$, we define $\Phi_\varepsilon'$ as the formula on variables $x_1, \ldots, x_n, \eta_1, \ldots, \eta_r$ obtained from $\Phi$ by replacing every constraint of the form $y = x^2$ by constraints $y = x^2 + \eta_i$ and $-\varepsilon \leq \eta_i \leq \varepsilon$, and where constraints $-1 \leq x \leq 1$ are added for every $x \in \{x_1, \ldots, x_n\}$. Note that there is a natural bijection between $V(\Phi_\varepsilon)$ and $V(\Phi_\varepsilon')$ for every $\varepsilon \geq 0$. Since we assumed $V(\Phi) = \emptyset$, it also follows that $V(\Phi_0') = \emptyset$. We furthermore define $\Phi_\infty'$ in the same way, except that we drop the constraints of the form $- \varepsilon \leq \eta_i \leq \varepsilon$. Observe that $V(\Phi_\infty')$ is bounded: the fact that every variable $x_i$ is bounded by 1 in absolute value, implies that every variable $\eta_i$ is bounded by 2 in absolute value. In particular $V(\Phi_\infty')$ is compact.

Next define $\Psi$ to be the formula on the same variables $x_1, \ldots x_n, \eta_1, \ldots, \eta_r$ which enforces $\eta_i = 0$ for all $1 \leq i \leq r$ and $-1 \leq x_i \leq 1$ for all $1 \leq i \leq n$. Note that $V(\Phi_\infty') \cap V(\Psi) = V(\Phi_0') = \emptyset$, and furthermore that both $V(\Phi_\infty')$ and $V(\Psi)$ are compact and nonempty. This means that we can apply \cref{cor:pos_dist}. Let $L$ be the maximum of $5n$ and the bit complexities of $\Phi_\infty'$ and $\Psi$. Note that $L$ is linear in the length of $\Phi$. 
We conclude that the distance between $V(\Phi_\infty')$ and $V(\Psi)$ is at least $2^{-2^{L+5}}$, by \Cref{cor:pos_dist}.

Recall that $r$ is the number of constraints of the form $y = x^2$ in $\Phi$.
Setting $M = L + 6$, it can be shown that $r \cdot 2^{-2^M} < 2^{-2^{L+5}}$.
Specifically, $M$ is linear in the length of $\Phi$.
We take $\varepsilon = 2^{-2^M}$. 
Suppose, for the purpose of contradiction, that $V(\Phi_\varepsilon) \neq \emptyset$, and therefore also $V(\Phi_\varepsilon') \neq \emptyset$. Let $P \in V(\Phi_\varepsilon')$, and let $P'$ be the point we obtain by setting all the $\eta_i$-coordinates of $P$ to 0. Now $P'$ is contained in $V(\Psi)$. Since every $\eta_i$-coordinate of $P$ was bounded by $\varepsilon$, the distance between~$P$ and $P'$ is at most $r\varepsilon$, therefore $P$ can be seen to have distance at most $r\varepsilon < 2^{-2^{L+5}}$ to $V(\Psi)$. Furthermore, $P$ is also contained in $V(\Phi_\infty')$. This implies that $V(\Phi_\infty')$ has distance smaller than~$2^{-2^{L+5}}$ to $V(\Psi)$. This contradicts the result from applying \Cref{cor:pos_dist}.

We conclude that indeed $V(\Phi) = \emptyset$ implies $V(\Phi_\varepsilon) = \emptyset$. This completes the proof of the lemma.
\end{proof}

%%%%%%%%%%%%%%%%%%%%%%%%%%%%%%%%%%%%%%%%%%
\subsection{Almost Square Explicit Equality Constraints}
\label{sec:SquareExplicit}
%%%%%%%%%%%%%%%%%%%%%%%%%%%%%%%%%%%%%%%%%%
Using \Cref{lemma:approx}, we are able to prove that an explicit version \CE is also \ER-complete, with some additional assumptions.
Note that this subsection is technically not needed 
for the proof of \Cref{thm:Equality} and \Cref{thm:Inequality}.
We will prove a similar lemma also for the inequality case.
And the inequality case can be used to also prove the 
equality case.
Yet, we believe that seeing the proof first for the equality case
makes it much easier to read \Cref{sec:Implicit}.

\begin{definition}[\CEEXPL] \label{def:ceexpl}
    Let $U\subseteq \R$ and
    let $f \colon \domain \to \R$ be a function. 
    We define the \CEEXPL problem to be the \CE problem corresponding to the function $f^* \colon U \times \R  \to \R$ defined by~$f^*(x, y) = y - f(x)$.
\end{definition}

Note that for this definition of $f^*$, we have
\[
\equalzero{f^*} = \set{(x, y) \in \R \times \R}{x \not\in U \lor y = f(x)}.
\]
In particular, this means that if we know that a variable $x$ is forced to lie in $U$, then $(x,y) \in \equalzero{f^*}$ will exactly imply that $y = f(x)$. In what follows, we will ensure we are in the case where all variables are contained in $U$, so this enables us to enforce constraints of the form $y = f(x)$ on these variables, while also implying that the constructed instance is \domainadherent.

The goal of this section is to prove the following result:
\begin{restatable}{lemma}{fETRweak} \label{lemma:delta-f-ETR}
Let $\domain \subseteq \R$ be a neighborhood of $0$, and let $f \colon \domain \to \R$ be a function such that $|f(x) - x^2| \leq \frac{1}{10}|x|^3$ for all $x \in \domain \subseteq \R$. 
Let $T$ be strictly bounded from above by $\frac{1}{4}$ and \nicelyComputable. Furthermore, assume that the interval $[-T(n), T(n)]$ is contained in $U$ for all $n$. In this setting, \CEEXPL{} is \ER-hard, even when only considering instances where $\delta = T(n)$, with $n$ being the number of variables.
\end{restatable}

The reason that we impose these specific constraints on $f$, which enforce $f$ to be very similar to squaring, is that the proof will use \ER-hardness of a problem involving a squaring constraint. Furthermore, this specific case can be generalized to more general functions $f$.

\begin{proof}
Before giving the details of the construction, we will first give an overview of the used approach. The idea of this proof is to start with an instance of \etrsquare, and convert this into a \CEEXPL{} instance by using~$f$ to approximate squaring. In order to ensure that~$f$ approximates squaring close enough, the whole instance is scaled by some small factor~$\eps$, so every variable $x$ is replaced by a variable representing $\eps x$ instead.

The linear constraints and inequalities are easy to rewrite in terms of $\eps x$, for example a constraint of the form $x + y = z$ can be rewritten to $\eps x + \eps y = \eps z$.

Handling a squaring constraint $y = x^2$ is a bit more complicated. The first step is to rewrite this to a constraint involving $\eps x$ and $\eps y$, we get $ \eps \cdot \eps y = (\eps x)^2$. However, in the \CEEXPL{} problem there is no easy way to simulate the multiplication on the left-hand side of this equation. To solve this, we rewrite this equation to only use sums and differences of squares:
\[
\eps^2 + 2(\eps x)^2 + (\eps y)^2 - (\eps + \eps y)^2 = 0.
\]
To simplify notation a bit, we will denote $t_1 = \eps$, $t_2 = \eps x$, $t_3 = \eps y$ and $t_4 = \eps + \eps y$. Using this notation the equation becomes $t_1^2 + 2t_2^2 + t_3^2 - t_4^2 = 0$. This is still not something we can directly enforce in a \CE{} formula. However, by applying the function $f$, squaring can be approximated. Furthermore, \cref{lemma:approx} on a high level implies that such an approximation is enough to guarantee the existence of a solution to the original equations. This is why in the \CE{} formulation we enforce
\begin{equation}
f(t_1) + 2f(t_2) + f(t_3) - f(t_4) = O(\eps^3). \label{eq:f-ETR_squaring}
\end{equation}
Enforcing the $= O(\eps^3)$ presents another problem: we cannot easily compute $\eps^3$. To counter this, we instead bound the left-hand side of the equation in absolute value by $2(f(\eps + f(\eps)) - f(\eps))$, which is approximately equal to $4\eps^3$ (note that in the case that $f(x) = x^2$ for all $x$, this expression would actually be equal to $4\eps^3 + 2\eps^4$).
The details of this reduction and a proof of its correctness will be worked out in the remainder of this proof.

\paragraph{Reduction.}
Let~$\Phi$ be an \etrsquare formula. We will now construct a \CE{} formula~$\Psi$ such that $V(\Phi) \neq \emptyset$ if and only if $V(\Psi) \neq \emptyset$. The value of $\delta$ will be defined at the end of the construction as $T(n)$, with $n$ the final number of variables. Since the construction itself does not depend on the exact value of $\delta$, this does not cause any issues. The only important property is that $\delta < \frac{1}{4}$.

Let~$M$ be the constant obtained by applying \cref{lemma:approx} to~$\Phi$, and let~$L$ be the smallest positive integer such that $2^{-2^L} \leq \frac{1}{100} \cdot 2^{-2^M}$ and $L\geq 3$. 
We start by introducing variables~$\var{\delta_i}$ for $0 \leq i \leq L$. The variable~$\var{\delta_0}$ satisfies the constraint $\var{\delta_0} = \delta$, and for each $1 \leq i \leq L$ we add a constraint
\[
\var{\delta_i} = f(\var{\delta_{i-1}}).
\]
Denote $\var{\eps} = \var{\delta_L}$. The idea behind these definitions is to simulate repeated squaring, as we will see later they force the value of~$\var{\eps}$ to be in the interval $\left(0, 2^{-2^L}\right]$.

Next, we introduce a new variable $\var{\approx 2\eps^3}$ together with a (constant) number of constraints and auxiliary variables that enforce
\[
\var{\approx 2\eps^3} = f(\var{\eps} + f(\var{\eps})) - f(\var{\eps}).
\]
This can be done explicitly by introducing auxiliary variables $\var{f(\eps)}$, $\var{\eps + f(\eps)}$ and $\var{f(\eps + f(\eps))}$ and adding the following constraints:
\begin{align*}
\var{f(\eps)} &= f(\var{\eps})\\
\var{\eps + f(\eps)} &= \var{\eps} + \var{f(\eps)}\\
\var{f(\eps + f(\eps))} &= f(\var{\eps + f(\eps)})\\
\var{f(\eps + f(\eps))} &= \var{\approx 2\eps^3} + \var{f(\eps)}.
\end{align*}
In the rest of this proof, and in future proofs of this paper, we will not give such explicit constraints anymore. The variable $\var{\approx 2\eps^3}$ will be used to bound the error on the constraints replacing squaring constraints, as mentioned in the overview of this proof. Stated differently, it replaces the ``$= O(\eps^3)$'' part of \cref{eq:f-ETR_squaring}.

Now, for each variable~$x$ of~$\Phi$, we add a variable~$\var{\eps x}$ to~$\Psi$, together with some constraints which enforce that $-\var{\eps} \leq \var{\eps x} \leq \var{\eps}$. Furthermore each constraint $x + y = z$ is replaced by $\var{\eps x} + \var{\eps y} = \var{\eps z}$, each constraint $x \geq 0$ is replaced by $\var{\eps x} \geq 0$ and each constraint $x = 1$ is replaced by $\var{\eps x} = \var{\eps}$.

For each constraint $y = x^2$, we build \cref{eq:f-ETR_squaring} as in the overview. To do this, we first introduce variables $\var{t_1}$, $\var{t_2}$, $\var{t_3}$ and $\var{t_4}$ satisfying
\begin{align*}
\var{t_1} &= \var{\eps}\\
\var{t_2} &= \var{\eps x}\\
\var{t_3} &= \var{\eps y}\\
\var{t_4} &= \var{\eps} + \var{\eps y}.
\end{align*}
(Note that, even though $x$ and $y$ are suppressed in the notation here, the variables $\var{t_1}$, $\var{t_2}$, $\var{t_3}$ and $\var{t_4}$ should actually be distinct variables for each constraint $y = x^2$.) Next we introduce a new variable~$\var{\eta_{x, y}}$ representing the left-hand side of \cref{eq:f-ETR_squaring}:
\[
\var{\eta_{x, y}} = 
f(\var{t_1}) + 2f(\var{t_2}) + f(\var{t_3}) - f(\var{t_4}).
\]

The next step is to bound this variable, for this we add constraints which enforce
\begin{align*}
\var{\eta_{x, y}} &\geq -2\var{\approx 2\eps^3}\quad \text{and}\\
\var{\eta_{x, y}} &\leq 2\var{\approx 2\eps^3}.
\end{align*}
This completes the construction of~$\Psi$. Now we can take $\delta$ to equal $T(n)$, where $n$ is the number of variables in $\Psi$. Note that in this construction, every constraint of $\Phi$ is replaced by a constant number of constraints in $\Psi$, and therefore we have $|\Psi| = O(|\Phi|)$. In particular this reduction can be executed in linear time.

\paragraph{Calculations.}
To show the validity of the reduction, we first perform some side calculations. We define $\delta_0 = \delta$, for $1 \leq i \leq L$ we take $\delta_i = f(\delta_{i-1})$, and we take $\eps = \delta_L$. We deduce the following facts:
\begin{align}
|f(x) - x^2| &\leq \frac{1}{10}|x|^3 & \text{for } x \in [-\delta, \delta] \setminus \{0\} \label[ineq]{fact:f_approx_square}\\
0 &< f(x) \leq 2x^2 & \text{for } x \in [-\delta, \delta] \setminus \{0\} \label[ineq]{fact:f_bounds}\\
\eps &\leq \frac{1}{100} \min\left(2^{-2^M}, \delta\right)& \label[ineq]{fact:eps_bound}\\
f(\eps) &< \eps& \label[ineq]{fact:feps_bound}\\
f(\eps + f(\eps)) - f(\eps) &\in [\eps^3, 3\eps^3]&\label[ineq]{fact:eps_cubed_approx}
\end{align}
\Cref{fact:f_approx_square} is one of the assumptions and is repeated here just for clarity. Combining this with~$\delta \leq \frac{1}{4}$, \cref{fact:f_bounds} follows.

Using induction with the fact that $0 < f(x) \leq 2x^2$ for $x \in [-\delta, \delta] \setminus \{0\}$, it follows that $0 < \delta_i \leq 2^{-2^i-1}$ for all~$i$, so $0 < \eps \leq \frac{1}{2}2^{-2^L}$. 

Using the definition of~$L$, we get that $\eps \leq \tfrac{1}{100}2^{-2^M}$.
Using that $L\geq 3$ and $\delta \leq 1/4$, we get that $\eps \leq \delta^{2^3} = \delta^{16} \leq \tfrac{1}{100}\delta$.
Together this implies \cref{fact:eps_bound}.

 The fact $f(\eps) < \eps$ now follows from \cref{fact:f_bounds} with the observion that $\eps < \min(\frac{1}{2}, \delta)$, which follows from \cref{fact:eps_bound}.

For deriving \cref{fact:eps_cubed_approx}, we first rewrite by adding and subtracting some terms, and applying the triangle inequality:
\begin{align*}
|f(\eps + f(\eps)) - f(\eps) - 2\eps^3| &= |f(\eps + f(\eps)) - (\eps + f(\eps))^2 + \eps^2 - f(\eps)\\
&\quad \; + (f(\eps) + \eps^2)(f(\eps) - \eps^2) + \eps^4 + 2\eps(f(\eps) - \eps^2)|\\
&\leq |f(\eps + f(\eps)) - (\eps + f(\eps))^2| + |\eps^2 - f(\eps)|\\
&\quad \; + (f(\eps) + \eps^2)|f(\eps) - \eps^2| + \eps^4 + 2\eps|f(\eps) - \eps^2|.
\end{align*}
To this we apply \cref{fact:f_approx_square,fact:eps_bound,fact:feps_bound} to obtain the desired bound, where in particular we use that 
\cref{fact:eps_bound} implies $\eps < \frac{1}{100}$:
\begin{align*}
|f(\eps + f(\eps)) - f(\eps) - 2\eps^3|
&\leq |f(\eps + f(\eps)) - (\eps + f(\eps))^2| + |\eps^2 - f(\eps)|\\
&\quad \; + (f(\eps) + \eps^2)|f(\eps) - \eps^2| + \eps^4 + 2\eps|f(\eps) - \eps^2|\\
&\leq \frac{1}{10}(\eps+f(\eps))^3 + \frac{1}{10}\eps^3 + \frac{1}{10}(f(\eps) + \eps^2)\eps^3 + \eps^4 + \frac{1}{5}\eps^4\\
&\leq \frac{8}{10}\eps^3 + \frac{1}{10}\eps^3 + \frac{1}{10}\eps^3\\
&\leq \eps^3,
\end{align*}
so $f(\eps + f(\eps)) - f(\eps) \in [\eps^3, 3\eps^3]$.

\paragraph{$V(\Phi)$ nonempty implies $V(\Psi)$ nonempty.}
Now we can start to prove the validity of the reduction. First suppose that $V(\Phi) \neq \emptyset$, so there is some $P \in V(\Phi)$. It needs to be shown that also $V(\Psi) \neq \emptyset$, to do this we construct a point $Q \in V(\Psi)$. For a variable~$x$ of~$\Phi$, we will use the notation~$x(P)$ for the value of this variable for the point~$P$. A similar notation is used for~$Q$. To define~$Q$, we take $\var{\eps x}(Q) = \eps x(P)$ for all variables~$x$ of~$\Phi$, and we enforce that~$Q$ satisfies all equality constraints of~$\Psi$. This uniquely defines the value of~$Q$ in all other variables of~$\Psi$. In particular, we get that 
\begin{align*}
\var{\eps}(Q) &= \eps\\
\var{\approx 2\eps^3}(Q) &= f(\eps + f(\eps)) - f(\eps)\\
\var{\eta_{x, y}}(Q) &= f(\eps) + 2f(\eps x(P)) + f(\eps y(P)) - f(\eps + \eps y(P)),
\end{align*}
where the last equality holds for all constraints $y = x^2$ in~$\Phi$.

It is left to show that~$Q$ also satisfies all inequalities of~$\Psi$. 
There are three types of these inequalities. 
Firstly, we have inequalities which enforce $|\var{\eps x}(Q)| \leq \var{\eps}(Q)$. 
That these are satisfied for~$Q$ follows from the fact that $|x(P)| \leq 1$ since~$\Phi$ is an 
\etrsmall
formula. 
Secondly, for every inequality $x \geq 0$ in~$\Phi$, we get a corresponding inequality $\var{\eps x} \geq 0$, that this is satisfied follows by combining $\var{\eps x}(Q) = \eps x(P)$ and $x(P) \geq 0$.

Finally, for every constraint $y = x^2$ in~$\Phi$ we get constraints enforcing $|\var{\eta_{x, y}}| \leq 2\var{\approx 2\eps^3}$. To see that these are satisfied, first we shorten the notation a bit by writing $t_1 = \var{t_1}(Q) = \eps$, $t_2 = \var{t_2}(Q) = \eps x(P)$, $t_3 = \var{t_3}(Q) = \eps y(P)$ and $t_4 = \var{t_4}(Q) = \eps + \eps y(P)$. Now $\var{\eta_{x, y}}(Q)$ can be bounded, for this we first use the triangle inequality:
\begin{align*}
|\var{\eta_{x, y}}(Q)| &= |f(t_1) + 2f(t_2) + f(t_3) - f(t_4)|\\
&= |f(t_1)-t_1^2 + 2(f(t_2)-t_2^2) + f(t_3)-t_3^2 - (f(t_4) - t_4^2)\\
&\quad \; + t_1^2 + 2t_2^2 + t_3^2 - t_4^2|\\
&\leq |f(t_1)-t_1^2| + 2|f(t_2)-t_2^2| + |f(t_3)-t_3^2| + |f(t_4) - t_4^2|\\
&\quad \; + |t_1^2 + 2t_2^2 + t_3^2 - t_4^2|\\
\end{align*}
Note that $t_1, t_2, t_3$ and $t_4$ were chosen in such a way to ensure that, given $y(P) = x(P)^2$, we have $t_1^2 + 2t_2^2 + t_3^2 - t_4^2 = 0$. Using this together with \cref{fact:f_approx_square} and the facts that $t_1$, $t_2$ and $t_3$ are bounded in absolute value by $\eps$, and $t_4$ is bounded in absolute value by $2\eps$, we find
\begin{align*}
|\var{\eta_{x, y}}(Q)| &\leq |f(t_1)-t_1^2| + 2|f(t_2)-t_2^2| + |f(t_3)-t_3^2| + |f(t_4) - t_4^2|\\
&\quad \; + |t_1^2 + 2t_2^2 + t_3^2 - t_4^2|\\
&\leq \frac{1}{10}|t_1|^3 + \frac{1}{5}|t_2|^3 + \frac{1}{10}|t_3|^3 + \frac{1}{10}|t_4|^3 + 0\\
&\leq \left(\frac{1}{10} + \frac{1}{5} + \frac{1}{10}\right)\varepsilon^3 
+ \frac{1}{10}(2\varepsilon)^3\\
&\leq \frac{6}{5} \eps^3.
\end{align*}
Finally using \cref{fact:eps_cubed_approx} we derive
\begin{align*}
|\var{\eta_{x, y}}(Q)| &\leq \frac{6}{5} \eps^3\\
&\leq 2(f(\eps + f(\eps)) -f(\eps)).
\end{align*}
This completes the proof that $Q \in V(\Psi)$, so $V(\Psi) \neq \emptyset$.

\paragraph{$V(\Psi)$ nonempty implies $V(\Phi)$ nonempty.}
Next, suppose that there is some $Q \in V(\Psi)$. Now we want to show that $|x(Q)| \leq \delta$ for all variables~$x$ in~$\Psi$, and we want to prove that $V(\Phi) \neq \emptyset$. 
We start by bounding the coordinates. 
Note that the values $\var{\delta_i}(Q)$ can inductively be shown to be smaller than~$\delta$. Here we use that $\var{\delta_i}(Q)$ being smaller than $\delta$ implies that $\var{\delta_i}(Q) \in U$, so the constraint $\var{\delta_{i+1}} = f(\var{\delta_i})$ actually enforces $\var{\delta_{i+1}}(Q) = f(\var{\delta_i}(Q))$.
From this it follows that $\var{\eps}(Q) = \eps$. Consequently, for every variable $x$ in $\Phi$, it can be inferred that $|\var{\eps x}(Q)| \leq |\var{\eps}(Q)| \leq \eps$.
Using this, it can be shown that also all values of the auxiliary variables except for the~$\var{\delta_i}$ are bounded by $100 \eps \leq \delta$. So this shows that~$Q$ is contained in $[-\delta, \delta]^{n}$, where~$n$ is the number of variables of~$\Psi$. This also implies that $Q$ is \domainadherent, since $[-\delta, \delta] \subseteq U$.

Now we need to show that $V(\Phi) \neq \emptyset$. 
We apply \cref{lemma:approx} to show this. We will construct a point $P$ within $V(\Phi_{100\eps})$. This construction implies that $V(\Phi)$ is non-empty, given that $100 \eps \leq 2^{-2^M}$.
We define the point~$P$ by taking $x(P) = \frac{\var{\eps x}(Q)}{\eps}$ for all variables~$x$ of~$\Phi$. It immediately follows that~$P$ satisfies all linear constraints and inequality constraints of~$\Phi$, and it is only left to check that it satisfies the constraints $|y - x^2| \leq 100\eps$ of~$\Phi_{100\eps}$. To do this, first, we observe that, using \cref{fact:eps_cubed_approx},
\begin{align*}
|\var{\eta_{x, y}}(Q)| &\leq 2\var{\approx 2\eps^3}(Q)\\
&= 2(f(\eps + f(\eps)) - f(\eps))\\
&\leq 6\eps^3.
\end{align*}
Now we will try to bound $|y(P) - x(P)^2|$. First we again write $t_1 = \var{t_1}(Q) = \eps$, $t_2 = \var{t_2}(Q) = \eps x(P)$, $t_3 = \var{t_3}(Q) = \eps y(P)$ and $t_4 = \var{t_4}(Q) = \eps + \eps y(P)$. These choices were made such that
\[
t_1^2 + 2t_2^2 + t_3^2 - t_4^2 = 2\eps^2(x(P)^2-y(P)),
\]
so we see
\[
2\eps^2|y(P) - x(P)^2| = \left|t_1^2 + 2t_2^2 + t_3^2 - t_4^2\right|.
\]
Next we apply the triangle inequality to get an expression to which we can apply \cref{fact:f_approx_square} and the bound on $|\var{\eta_{x, y}}(Q)|$:
\begin{align*}
2\eps^2|y(P) - x(P)^2| &= \left|t_1^2 + 2t_2^2 + t_3^2 - t_4^2\right|\\
&= |t_1^2-f(t_1) + 2(t_2^2-f(t_2)) + t_3^2-f(t_3) - (t_4^2-f(t_4))\\
&\quad \; + f(t_1) + 2f(t_2) + f(t_3) - f(t_4)|\\
&\leq |t_1^2-f(t_1)| + 2|t_2^2-f(t_2)| + |t_3^2-f(t_3)| + |t_4^2-f(t_4)|\\
&\quad \; + |f(t_1) + 2f(t_2) + f(t_3) - f(t_4)|\\
&= |t_1^2-f(t_1)| + 2|t_2^2-f(t_2)| + |t_3^2-f(t_3)| + |t_4^2-f(t_4)|\\
&\quad \; + |\var{\eta_{x, y}}(Q)|
\end{align*}
Applying \cref{fact:f_approx_square} and the bound on $|\var{\eta_{x, y}}(Q)|$ yields
\begin{align*}
2\eps^2|y(P) - x(P)^2| &\leq |t_1^2-f(t_1)| + 2|t_2^2-f(t_2)| + |t_3^2-f(t_3)| + |t_4^2-f(t_4)|\\
&\quad \; + |\var{\eta_{x, y}}(Q)|\\
&\leq \frac{1}{10}t_1^3 + \frac{1}{5}t_2^3 + \frac{1}{10}t_3^3 + \frac{1}{10}t_4^3 + 6\eps^3
\end{align*}
Finally we use that $t_1$, $t_2$ and $t_3$ are bounded in absolute value by $\eps$, and that~$t_4$ is bounded by~$2\eps$:
\begin{align*}
2\eps^2|y(P) - x(P)^2| &\leq \frac{1}{10}t_1^3 + \frac{1}{5}t_2^3 + \frac{1}{10}t_3^3 + \frac{1}{10}t_4^3 + 6\eps^3\\
&\leq \frac{1}{10}\eps^3 + \frac{1}{5}\eps^3 + \frac{1}{10}\eps^3 + \frac{8}{10}\eps^3 + 6\eps^3\\
&< 200\eps^3.
\end{align*}
So $|y(P) - x(P)^2| \leq 100\eps$. This proves that $P \in V(\Phi_{100\eps})$, and therefore $V(\Phi) \neq \emptyset$.

This finishes the proof of the validity of the reduction of \etrsquare to \CEEXPL. We conclude that for the given~$f$, the problem \CEEXPL{} is \ER-hard.
\end{proof}

%%%%%%%%%%%%%%%%%%%%%%%%%%%%%%%%%%%%%%%%%%
\subsection{Almost Square Explicit Inequality Constraints}
\label{sec:SquareExplicitInequality}
%%%%%%%%%%%%%%%%%%%%%%%%%%%%%%%%%%%%%%%%%%
In this section, we will prove a number of hardness results about the explicit version of \CCI. Before we can describe these results, we first need the following definition:

\begin{definition}[\CCIEXPL]
Let $\domain \subseteq \R$ and let $f, g \colon \domain \to \R$ be two functions. Now we define the \CCIEXPL problem to be the \CCI problem corresponding to the functions $f^*, g^* \colon \domain \times \R \to \R$ defined by $f^*(x, y) = y - f(x)$ and $g^*(x, y) = g(x) - y$.
\end{definition}

Note that, just as in \cref{def:ceexpl}, the constraints \largerzero{f^*} and \largerzero{g^*} in this definition can be used to enforce $y \geq f(x)$ and $y \leq g(x)$ if we already know that $y \in U$.
We will prove that \CCIEXPL is $\ER$-hard in a large number of cases. In particular, we prove the following:
\begin{restatable}{corollary}{CCIEXPLdifferentiableCOR} \label{cor:fgETRINEQEXPL_differentiable}
Let $\domain \subseteq \R$ be a neighborhood of $0$, and let $f, g \colon \domain \to \R$ be functions which are three times differentiable such that $f(0) = g(0) = 0$ and $f'(0), f''(0), g'(0), g''(0) \in \Q$ with $f''(0), g''(0) > 0$. 
Let $T$ be bounded and nicely computable. In this setting, \CCIEXPL{} is \ER-hard, even when considering only instances where $\delta = T(n)$, with $n$ being the number of variables.
\end{restatable}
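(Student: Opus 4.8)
The plan is to reduce from \CEEXPL{} (or more precisely, from the \ER-hardness result \Cref{lemma:delta-f-ETR}) by showing that any one explicit \concave constraint $y \geq f(x)$ together with any one explicit \concave-from-above constraint $y \leq g(x)$ (i.e.\ $g(x)-y \geq 0$) can simulate an explicit \emph{equality} constraint $y = h(x)$ for a suitable well-behaved $h$. The point is that the two inequalities $y \geq f(x)$ and $y \leq g(x)$ pinch $y$, and if $f$ and $g$ are chosen so that they share a common quadratic approximation, then sandwiching forces $y$ to be close to that common approximation; by the ``approximate solution suffices'' philosophy (\Cref{lemma:approx}, and the scaling trick already used in the proof of \Cref{lemma:delta-f-ETR}), approximate pinching is as good as exact equality.

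Concretely, I would first reduce the general hypotheses on $f,g$ (2-times differentiable, $f(0)=g(0)=0$, rational first and second derivatives at $0$, and $f''(0),g''(0)>0$) to the special normalized form $|f(x)-x^2|\leq \tfrac{1}{10}|x|^3$ and $|g(x)-x^2|\leq\tfrac{1}{10}|x|^3$ on a small neighborhood, exactly as was done in the equality case: apply an affine change of variables $x \mapsto ax$, $y \mapsto by + cx$ (realizable with addition and $x=\delta$ constraints) to kill the linear term and rescale the quadratic coefficient to $1$, then use Taylor's theorem with the $C^2$ hypothesis to bound the remainder by $\tfrac{1}{10}|x|^3$ after shrinking the domain. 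The catch is that $f$ and $g$ need a \emph{common} normalization: after normalizing $f$ so that $f(x)=x^2+O(|x|^3)$, one applies the \emph{same} affine substitution to $g$; since $g$ also has a positive definite quadratic part and vanishing value at the origin but possibly a different quadratic coefficient, an additional independent rescaling of $g$'s output (a linear constraint) brings it to $g(x)=x^2+O(|x|^3)$ as well. Both substitutions only use constraints available in \CCIEXPL.

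Once both $f$ and $g$ are in almost-square form, the main body of the reduction mirrors the proof of \Cref{lemma:delta-f-ETR}: start from an ETR-SQUARE-1 instance, scale everything by a tiny $\var{\epsilon}$ produced by iterating $f$ (or $g$) from $\delta$, build the sum-of-squares identity $t_1^2 + 2t_2^2 + t_3^2 - t_4^2 = 0$, and replace each square $t_i^2$ — but now, for an upper bound on the expression, use whichever of $f,g$ gives the right-sided estimate, and for a lower bound use the other one. That is, wherever the equality proof wrote a single two-sided inequality $|\var{\eta_{x,y}}| \leq 2\var{\approx 2\epsilon^3}$, the inequality version splits into two one-sided constraints, each assembled from the appropriate mixture of $f$-applications and $g$-applications so that the error term has the correct sign. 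The error budget $\var{\approx 2\epsilon^3}\approx 4\epsilon^3$ and the comparison with \Cref{lemma:approx}'s threshold $2^{-2^M}$ go through verbatim; the arithmetic is just duplicated.

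The main obstacle I anticipate is bookkeeping the signs: in the equality case one never has to worry whether $f(t)-t^2$ is positive or negative, only that it is small; here, to get a genuine \emph{one-sided} bound on $\var{\eta_{x,y}}$ one must arrange that every application of $f$ and every application of $g$ contributes an error of a controlled sign, which forces a careful choice of which constraint ($y\geq f(x)$ versus $y\leq g(x)$) encodes each intermediate squared quantity, taking into account the signs of the coefficients $1,2,1,-1$ in front of $t_1^2,t_2^2,t_3^2,t_4^2$. A secondary nuisance is that \CCIEXPL{} only lets us write $y \geq f(x)$ and $y \leq g(x)$, not the reverse inequalities, so expressing ``$w \leq f(x)$'' requires an auxiliary variable and an addition constraint (write $f(x)=z_1$ via the \concave constraint applied tightly, then $z_1 - w = z_2$, $z_2 \geq 0$) — the same device noted at the end of \Cref{sub:results}. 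Neither of these is conceptually hard, but getting all the signs consistent is where the proof could silently break, so I would set up the sum-of-squares identity and the sign assignment very explicitly before touching the $\epsilon$-estimates.
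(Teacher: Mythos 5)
Your proposal is correct and follows essentially the same route as the paper: normalize $f$ and $g$ to the almost-square form $|f(x)-x^2|,|g(x)-x^2|\leq\tfrac{1}{10}|x|^3$ via Taylor's theorem and affine substitutions realizable with linear constraints (the paper does this in two staged lemmas, first rescaling $x\mapsto x/N$ and then subtracting the linear term), and then rerun the ETR-SQUARE-1 reduction of \cref{lemma:delta-f-ETR} with the single two-sided bound on $\var{\eta_{x,y}}$ split into two one-sided bounds, each assembled from the sign-appropriate mixture of $f$- and $g$-applications (the paper's $\var{\eta_{x,y}^{\mathrm{low}}}$ and $\var{\eta_{x,y}^{\mathrm{high}}}$, with $g(t_1)+2g(t_2)+g(t_3)-f(t_4)$ and $f(t_1)+2f(t_2)+f(t_3)-g(t_4)$ respectively). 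The sign bookkeeping and the auxiliary-variable device for the unavailable inequality directions that you flag as the main obstacles are handled in the paper exactly as you anticipate.
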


This corollary will be an important step towards proving \cref{thm:Equality}. 
Before we can prove this corollary, we first prove another result which is similar to \cref{lemma:delta-f-ETR}.

\begin{lemma} \label{lemma:CCIEXPL_fg_approx_sqr}
Let $\domain \subseteq \R$ be a neighborhood of 0, and let $f, g \colon \domain \to \R$ be functions such that $|f(x) - x^2| \leq \frac{1}{10}|x|^3$ and $|g(x) - x^2| \leq \frac{1}{10}|x|^3$ for all $x \in \domain$. 
Let $T$ be strictly bounded from above by $\frac{1}{4}$ and nicely computable. Furthermore, assume that the interval $[-T(n), T(n)]$ is contained in $\domain$ for all $n$. In this setting, \CCIEXPL{} is \ER-hard, even when only considering instances where $\delta = T(n)$, with $n$ being the number of variables.
\end{lemma}
\begin{proof}
The idea is to use almost the same construction as in \cref{lemma:delta-f-ETR}, so we recommend the reader to first read the proof of this lemma. The first main difference is that some extra care needs to be taken when making the constraints for the~$\var{\delta_i}$ variables. 
Also, the squaring constraints need to be handled in a slightly different way. In order to do this, we replace the variables~$\var{\eta_{x, y}}$ by two new variables~$\var{\eta_{x, y}^{\mathrm{low}}}$ and~$\var{\eta_{x, y}^{\mathrm{high}}}$, which impose a lower bound, respectively upper bound, on the value of $t_1^2 + 2t_2^2 + t_3^2 - t_4^2$. Here $t_1 = \eps$, $t_2 = \eps x$, $t_3 = \eps y$ and $t_4 = \eps + \eps y$, as before.

\paragraph{Reduction.}
Let~$\Phi$ be an \etrsquare formula. We will construct a \CCIEXPL{} formula~$\Psi$ such that $V(\Phi) \neq \emptyset$ if and only if $V(\Psi) \neq \emptyset$. Again we will take $\delta = T(n) < \frac{1}{4}$ at the end of the construction, where $n$ is the final number of variables. Let~$M$ be the constant obtained by applying \cref{lemma:approx} to~$\Phi$, and let~$L$ be a constant such that $2^{-2^L} \leq \frac{1}{100} \cdot 2^{-2^M}$ and $L \geq 3$, just like in the proof of \cref{lemma:delta-f-ETR}.

We again introduce~$\var{\delta_i}$ for $0 \leq i \leq L$, where the variable~$\var{\delta_0}$ should satisfy the constraint $\var{\delta_0} = \delta$. For each $1 \leq i \leq L$ we now add constraints enforcing 
\[
\frac{1}{2}f(\var{\delta_{i-1}}) \leq \var{\delta_i} \leq g(\var{\delta_{i-1}}).
\]
Denote $\var{\eps} = \var{\delta_L}$. The constraints $\var{\delta_i} \leq g(\var{\delta_{i-1}})$ are there to enforce that $\var{\eps} \leq 2^{-2^L}$, and the constraints $\frac{1}{2}f(\var{\delta_{i-1}}) \leq \var{\delta_i}$ are there to enforce that $\var{\eps} > 0$.

We continue by defining variables $\var{\leq g(\eps)}$ and $\var{\lesssim 2\eps^3}$ using constraints
\begin{align*}
\var{\leq g(\eps)} &\leq g(\var{\eps}),\\
\var{\leq g(\eps)} &\geq 0,\\
\var{\lesssim 2\eps^3} &\leq g(\var{\eps} + \var{\leq g(\eps)}) - f(\var{\eps}),\\
\var{\lesssim 2\eps^3} &\geq 0.
\end{align*}

This new variable $\var{\lesssim 2\eps^3}$ is a replacement for the variable $\var{\approx 2\eps^3}$ which occurred in the proof of \cref{lemma:delta-f-ETR}. Later, we will show that $\var{\lesssim 2\eps^3}$ is upper bounded by $3\eps^3$.

Next for each variable~$x$ of~$\Phi$, we add a variable~$\var{\eps x}$ to~$\Psi$, with constraints enforcing $-\var{\eps} \leq \var{\eps x} \leq \var{\eps}$. Constraints of type $x+y = z$, type $x \geq 0$ or type $x = 1$ are handled by replacing them by constraints $\var{\eps x} + \var{\eps y} = \var{\eps z}$, $\var{\eps x} \geq 0$ and $\var{\eps x} = \var{\eps}$, respectively.

For each constraint $y = x^2$, we introduce variables $\var{t_1}$, $\var{t_2}$, $\var{t_3}$ and $\var{t_4}$ with constraints
\begin{align*}
\var{t_1} &= \var{\eps},\\
\var{t_2} &= \var{\eps x},\\
\var{t_3} &= \var{\eps y},\\
\var{t_4} &= \var{\eps} + \var{\eps y}.
\end{align*}
Next we introduce two new variables:~$\var{\eta_{x, y}^{\mathrm{low}}}$ and~$\var{\eta_{x, y}^{\mathrm{high}}}$, together with constraints enforcing
\begin{align*}
\var{\eta_{x, y}^{\mathrm{low}}} &\leq g(\var{t_1}) + 2 g(\var{t_2}) + g(\var{t_3}) - f(\var{t_4}),\\
\var{\eta_{x, y}^{\mathrm{high}}} &\geq f(\var{t_1}) + 2 f(\var{t_2}) + f(\var{t_3}) - g(\var{t_4}),\\
\var{\eta_{x, y}^{\mathrm{low}}} &\geq -2\var{\lesssim 2\eps^3},\\
\var{\eta_{x, y}^{\mathrm{high}}} &\leq 2\var{\lesssim 2\eps^3}.
\end{align*}
Note that the variables $\var{\eta_{x, y}^{\mathrm{low}}}$ and $\var{\eta_{x, y}^{\mathrm{high}}}$ are not completely necessary, and that it is also possible to use direct constraints
\begin{align*}
 -2\var{\lesssim 2\eps^3} &\leq g(\var{t_1}) + 2 g(\var{t_2}) + g(\var{t_3}) - f(\var{t_4}),\\
2\var{\lesssim 2\eps^3} &\geq f(\var{t_1}) + 2 f(\var{t_2}) + f(\var{t_3}) - g(\var{t_4})
\end{align*}
instead. The two variables $\var{\eta_{x, y}^{\mathrm{low}}}$ and $\var{\eta_{x, y}^{\mathrm{high}}}$ are included since these slightly simplify the notation when proving correctness of this construction later on. This completes the construction of~$\Psi$, which can be performed in linear time. 
We finish by choosing $\delta = T(n)$ with $n$ being the number of variables in $\Psi$.
This completes the \CCIEXPL{} instance.

\paragraph{Calculations.}
Let $\eps$ be any real number such that there exist reals $\delta_i$ for $0 \leq i \leq L$ satisfying $\delta_0 = \delta$, $\delta_L = \eps$ and $\frac{1}{2} f(\delta_{i-1}) \leq \delta_i \leq g(\delta_{i-1})$ for all $1 \leq i \leq L$. Now we have the following facts (these facts hold in particular if $\eps = \var{\eps}(Q)$ for some $Q \in V(\Psi)$):
\begin{align}
|f(x) - x^2| &\leq \frac{1}{10}|x|^3 & \text{for } x \in [-\delta, \delta], \label[ineq]{fact:fg_f_approx_square}\\
|g(x) - x^2| &\leq \frac{1}{10}|x|^3 & \text{for } x \in [-\delta, \delta], \label[ineq]{fact:fg_g_approx_square}\\
\frac{1}{2} f(x) &\leq g(x) & \text{for } x \in [-\delta, \delta], \label[ineq]{fact:fg_halff}\\
\eps &\leq \frac{1}{100} \min\left(2^{-2^M}, \delta\right),& \label[ineq]{fact:fg_eps_bound}\\
f(\eps) &< \eps,  \quad g(\eps) < \eps, & \label[ineq]{fact:fg_fgeps_bound}\\
\var{\lesssim 2\eps^3}(Q) &\leq 3\var{\eps}(Q)^3 & \text{for } Q \in V(\Psi), \label[ineq]{fact:fg_2eps_cubed_var}\\
g(\eps + g(\eps)) - f(\eps) & \geq \eps^3. & \label[ineq]{fact:fg_leq_eps_cubed}
\end{align}
\Cref{fact:fg_f_approx_square,fact:fg_g_approx_square} are assumptions from the statement of the lemma. \Cref{fact:fg_halff} follows from this, together with the fact that $|x|$ is bounded by~$\frac{1}{4}$:
\[
\frac{1}{2} f(x) \leq \frac{1}{2}x^2 + \frac{1}{20}|x|^3 \leq x^2 - \frac{1}{10}|x|^3 \leq g(x).
\]
\Cref{fact:fg_eps_bound} can be derived in the same way as \cref{fact:eps_bound} from \cref{lemma:delta-f-ETR}, and now \cref{fact:fg_fgeps_bound} follows from this with \cref{fact:fg_f_approx_square,fact:fg_g_approx_square}.

In order to derive \cref{fact:fg_2eps_cubed_var}, we use the definition of the variable $\var{\lesssim 2\eps^3}$ and apply \cref{fact:fg_f_approx_square,fact:fg_g_approx_square} to this (here we take $\eps = \var{\eps}(Q)$ to simplify the notation a bit):
\begin{align*}
\var{\lesssim 2\eps^3}(Q) &\leq g(\eps + \var{\leq g(\eps)}(Q)) - f(\eps)\\
&\leq (\eps + \var{\leq g(\eps)}(Q))^2 + \frac{1}{10}(\eps + \var{\leq g(\eps)}(Q))^3 - \eps^2 + \frac{1}{10}\eps^3.
\end{align*}
Combining this with the constraint $\var{\leq g(\eps)} \leq g(\var{\eps})$ and \cref{fact:fg_g_approx_square,fact:fg_fgeps_bound}, we get
\begin{align*}
\var{\lesssim 2\eps^3}(Q) &\leq (\eps + \var{\leq g(\eps)}(Q))^2 + \frac{1}{10}(\eps + \var{\leq g(\eps)}(Q))^3 - \eps^2 + \frac{1}{10}\eps^3\\
&\leq (\eps + g(\eps))^2 + \frac{1}{10}(\eps + g(\eps))^3 - \eps^2 + \frac{1}{10}\eps^3\\
&\leq \left(\eps + \eps^2 + \frac{1}{10}\eps^3\right)^2 + \frac{1}{10}\left(\eps + \eps\right)^3 - \eps^2 + \frac{1}{10}\eps^3\\
&= \frac{29}{10}\eps^3 + \frac{6}{5}\eps^4 + \frac{1}{5}\eps^5 + \frac{1}{100}\eps^6.
\end{align*}
Combining this with $\eps \leq \frac{1}{100}$ (which follows from \cref{fact:fg_eps_bound}) yields that $\var{\lesssim 2\eps^3}(Q) \leq 3\eps^3$, as we wanted.

Finally \cref{fact:fg_leq_eps_cubed} follows from \cref{fact:fg_f_approx_square,fact:fg_g_approx_square,fact:fg_fgeps_bound} in the following manner:
\begin{align*}
g(\eps + g(\eps)) - f(\eps) &
\geq (\eps + g(\eps))^2 - \frac{1}{10}(\eps + g(\eps))^3 - \eps^2 - \frac{1}{10}\eps^3\\
&\geq \left(\eps + \eps^2 - \frac{1}{10} \eps^3\right)^2 - \frac{1}{10}(\eps + \eps)^3 - \eps^2 - \frac{1}{10}\eps^3\\
&= \frac{11}{10}\eps^3 + \frac{4}{5}\eps^4 - \frac{1}{5}\eps^5 + \frac{1}{100}\eps^6\\
&\geq \eps^3.
\end{align*}

\paragraph{$V(\Phi)$ nonempty implies $V(\Psi)$ nonempty.}
Suppose that $V(\Phi) \neq \emptyset$, and therefore, there exists some $P \in V(\Phi)$.
Our goal is to demonstrate that $V(\Psi) \neq \emptyset$. To achieve this, we construct a point~$Q$ that lies within $V(\Psi)$.
We start by taking $\var{\delta_0}(Q) = \delta$ and $\var{\delta_i}(Q) = g(\var{\delta_{i-1}})$ for all $1 \leq i \leq L$. 
By \cref{fact:fg_halff}, this definition satisfies all constraints on the $\var{\delta_i}$. Denote $\eps = \var{\eps}(Q) = \var{\delta_L}(Q)$. 
Next we take $\var{\leq g(\eps)}(Q) = g(\eps)$ and $\var{\lesssim 2\eps^3}(Q) = g(\eps + g(\eps)) - f(\eps)$, so by \cref{fact:fg_leq_eps_cubed} we know that $\var{\lesssim 2\eps^3}(Q) \geq \eps^3$.

For all variables~$x$ of~$\Phi$, we take $\var{\eps x}(Q) = \eps x(P)$. Since $V(P) \subseteq [-1, 1]^n$, it follows that all inequalities of the form $-\var{\eps}(Q) \leq \var{\eps x}(Q) \leq \var{\eps}(Q)$ are satisfied in this way. Also for every constraint from $\Phi$ of one of the forms $x + y = z$, $x \geq 0$ or $x = 1$, the corresponding constraint in~$\Psi$ is clearly satisfied.

Next we consider a squaring constraint $y = x^2$ from $\Phi$, for each such constraint we take
\begin{align*}
\var{t_1}(Q) &= \eps,\\
\var{t_2}(Q) &= \var{\eps x}(Q), \\
\var{t_3}(Q) &= \var{\eps y}(Q),\\
\var{t_4}(Q) &= \eps + \var{\eps y}(Q),\\
\var{\eta_{x, y}^{\mathrm{low}}}(Q) &= g(\var{t_1}(Q)) + 2g(\var{t_2}(Q)) + g(\var{t_3}(Q)) - f(\var{t_4}(Q)),\\
\var{\eta_{x, y}^{\mathrm{high}}}(Q) &= f(\var{t_1}(Q)) + 2f(\var{t_2}(Q)) + f(\var{t_3}(Q)) - g(\var{t_4}(Q)).
\end{align*}
Using these definitions, the only constraints for which we still need to check whether $Q$ satisfies them, are the constraints of the form $\var{\eta_{x, y}^{\mathrm{low}}} \geq -2\var{\lesssim 2\eps^3}$ and $\var{\eta_{x, y}^{\mathrm{high}}} \leq 2\var{\lesssim 2\eps^3}$.

We start by checking the first of these constraints. Denote $t_1 = \var{t_1}(Q)$, $t_2 = \var{t_2}(Q)$, $t_3 = \var{t_3}(Q)$ and $t_4 = \var{t_4}(Q)$. Now we can apply \cref{fact:fg_f_approx_square,fact:fg_g_approx_square} to the definition of $\var{\eta_{x, y}^{\mathrm{low}}}$:
\begin{align*}
\var{\eta_{x, y}^{\mathrm{low}}}(Q) &= g(t_1) + 2g(t_2) + g(t_3) - f(t_4)\\
&\geq t_1^2 - \frac{1}{10}|t_1|^3 + 2t_2^2 - \frac{1}{5}|t_2|^3 + t_3^2 - \frac{1}{10}|t_3|^3 - t_4^2 - \frac{1}{10}|t_4|^3\\
&= t_1^2 + 2t_2^2 + t_3^2 - t_4^2 - \left(\frac{1}{10}|t_1|^3 + \frac{1}{5}|t_2|^3 + \frac{1}{10}|t_3|^3 + \frac{1}{10}|t_4|^3\right).
\end{align*}
Since $t_1, \ldots t_4$ were chosen such that $t_1^2 + 2t_2^2 + t_3^2 - t_4^2 = \eps^2x(P)^2 - \eps^2y(P)$, and by the fact that $y(P) = x(P)^2$, it follows that $t_1^2 + 2t_2^2 + t_3^2 - t_4^2 = 0$. Furthermore, $t_1$, $t_2$ and $t_3$ are all bounded by $\eps$ in absolute value, while~$|t_4|$ is bounded by~$2\eps$. This yields
\begin{align*}
\var{\eta_{x, y}^{\mathrm{low}}}(Q) &\geq t_1^2 + 2t_2^2 + t_3^2 - t_4^2 - \left(\frac{1}{10}|t_1|^3 + \frac{1}{5}|t_2|^3 + \frac{1}{10}|t_3|^3 + \frac{1}{10}|t_4|^3\right)\\
&\geq 0 - \left(\frac{1}{10}\eps^3 + \frac{1}{5}\eps^3 + \frac{1}{10}\eps^3 + \frac{8}{10}\eps^3\right)\\
&\geq -2\eps^3.
\end{align*}
Combining this with $\var{\lesssim 2\eps^3}(Q) \geq \eps^3$, we find $\var{\eta_{x, y}^{\mathrm{low}}}(Q) \geq -2\var{\lesssim 2\eps^3}(Q)$.

Next we consider the constraint $\var{\eta_{x, y}^{\mathrm{high}}} \leq 2\var{\lesssim 2\eps^3}$. We apply \cref{fact:fg_f_approx_square,fact:fg_g_approx_square} to the definition of~$\var{\eta_{x, y}^{\mathrm{high}}}$:
\begin{align*}
\var{\eta_{x, y}^{\mathrm{high}}}(Q) &= f(t_1) + 2f(t_2) + f(t_3) - g(t_4)\\
&\leq t_1^2 + \frac{1}{10}|t_1|^3 + 2t_2^2 + \frac{1}{5}|t_2|^3 + t_3^2 + \frac{1}{10}|t_3|^3 - t_4^2 + \frac{1}{10}|t_4|^3\\
&= t_1^2 + 2t_2^2 + t_3^2 - t_4^2 + \frac{1}{10}|t_1|^3 + \frac{1}{5}|t_2|^3 + \frac{1}{10}|t_3|^3 + \frac{1}{10}|t_4|^3.
\end{align*}
Here we can apply that $t_1^2 + 2t_2^2 + t_3^2 - t_4^2 = 0$, and that $|t_1|$, $|t_2|$ and $|t_3|$ are bounded by $\eps$ and $|t_4|$ is bounded by $2\eps$ to get
\begin{align*}
\var{\eta_{x, y}^{\mathrm{high}}}(Q) &\leq t_1^2 + 2t_2^2 + t_3^2 - t_4^2 + \frac{1}{10}|t_1|^3 + \frac{1}{5}|t_2|^3 + \frac{1}{10}|t_3|^3 + \frac{1}{10}|t_4|^3\\
&\leq 0 + \frac{1}{10}\eps^3 + \frac{1}{5}\eps^3 + \frac{1}{10}\eps^3 + \frac{8}{10}\eps^3\\
&\leq 2\eps^3.
\end{align*}
So we get that $\var{\eta_{x, y}^{\mathrm{high}}}(Q) \leq 2\var{\lesssim 2\eps^3}(Q)$.

We conclude that $Q$ satisfies all constraints from $\Psi$, and therefore $Q \in V(\Psi)$. This proves that $V(\Psi) \neq \emptyset$.

\paragraph{$V(\Psi)$ nonempty implies $V(\Phi)$ nonempty.}
Next, let $Q \in V(\Psi)$. Just as in the proof of \cref{lemma:delta-f-ETR}, we want to show that $|x(Q)| \leq \delta$ for all variables~$x$ of~$\Psi$, and we want to prove that~$V(\Phi) \neq \emptyset$. Bounding the coordinates and showing that $Q$ is \domainadherent goes in exactly the same way as in \cref{lemma:delta-f-ETR}. 

In order to demonstrate that $V(\Phi) \neq \emptyset$, we once more apply \cref{lemma:approx}. We construct a point $P$ in $V(\Phi_{100\eps})$, where we again denote $\eps = \var{\eps}(Q)$.
This would imply $V(\Phi) \neq \emptyset$ since $100 \eps \leq 2^{-2^M}$. We take $x(P) = \frac{\var{\eps x}(Q)}{\eps}$ for all variables~$x$ of~$\Phi$. 
Now~$P$ satisfies all linear constraints and inequality constraints of~$\Phi$, and it only remains to be checked that it satisfies the constraints $|y - x^2| \leq 100\eps$ of~$\Phi_{100\eps}$.

We start by proving that $x(P)^2 - y(P) \leq 100\eps$. Denote $t_1 = \var{t_1}(Q) = \eps$, $t_2 = \var{t_2}(Q) = \eps x(P)$, $t_3 = \var{t_3}(Q) = \eps y(P)$ and $t_4 = \var{t_4}(Q) = \eps + \eps y(P)$. We have that
\[
t_1^2 + 2t_2^2 + t_3^2 - t_4^2 = 2\eps^2(x(P)^2-y(P)).
\]
Note that from \cref{fact:fg_f_approx_square,fact:fg_g_approx_square} it also follows that $x^2 \leq f(x) + \frac{1}{10}|x|^3$ and $x^2 \geq g(x) - \frac{1}{10}|x|^3$ for all $x \in [-\delta, \delta]$. Using this, we find
\begin{align*}
2\eps^2(x(P)^2-y(P)) &= t_1^2 + 2t_2^2 + t_3^2 - t_4^2\\
&\leq f(t_1) + \frac{1}{10}|t_1|^3 + 2f(t_2) + \frac{1}{5}|t_2|^3\\
&\quad + f(t_3) + \frac{1}{10}|t_3|^3 - g(t_4) + \frac{1}{10}|t_4|^3\\
&= f(t_1) + 2f(t_2) + f(t_3) - g(t_4)\\
&\quad + \frac{1}{10}|t_1|^3 + \frac{1}{5}|t_2|^3 + \frac{1}{10}|t_3|^3 + \frac{1}{10}|t_4|^3.
\end{align*}
To bound this, we use the variable $\var{\eta_{x, y}^{\mathrm{high}}}$, and the observation that $t_1$, $t_2$ and $t_3$ are bounded in absolute value by $\eps$, and $|t_4|$ is bounded by $2\eps$:
\begin{align*}
2\eps^2(x(P)^2-y(P)) &\leq f(t_1) + 2f(t_2) + f(t_3) - g(t_4)\\
&\quad + \frac{1}{10}|t_1|^3 + \frac{1}{5}|t_2|^3 + \frac{1}{10}|t_3|^3 + \frac{1}{10}|t_4|^3\\
&\leq \var{\eta_{x, y}^{\mathrm{high}}}(Q) + \frac{1}{10}\eps^3 + \frac{1}{5}\eps^3 + \frac{1}{10}\eps^3 + \frac{8}{10}\eps^3\\
&\leq 2\var{\lesssim 2\eps^3}(Q) + 2\eps^3.
\end{align*}
Here we can apply \cref{fact:fg_2eps_cubed_var} to find
\[
2\eps^2(x(P)^2-y(P)) \leq 8\eps^3 < 200\eps^3.
\]
This implies $x(P)^2 - y(P) \leq 100\eps$, as we wanted.

The proof that $x(P)^2 - y(P) \geq -100\eps$ works in a similar manner. Leaving out some intermediate steps, it looks as follows:
\begin{align*}
2\eps^2(x(P)^2-y(P)) &= t_1^2 + 2t_2^2 + t_3^2 - t_4^2\\
&\geq g(t_1) + 2g(t_2) + g(t_3) - f(t_4)\\
&\quad - \left(\frac{1}{10}|t_1|^3 + \frac{1}{5}|t_2|^3 + \frac{1}{10}|t_3|^3 + \frac{1}{10}|t_4|^3\right)\\
&\geq \var{\eta_{x, y}^{\mathrm{low}}}(Q) - 2\eps^3\\
&\geq -8\eps^3 > -200\eps^3,
\end{align*}
and therefore $x(P)^2 - y(P) \geq -100\eps$. This implies that $P \in V(\Phi_{100\eps})$, and therefore $V(\Phi) \neq \emptyset$.

This completes the proof of the validity of the reduction of \etrsquare to \CCIEXPL{}. So for~$f$ and~$g$ satisfying the conditions from the lemma, the problem \CCIEXPL{} is \ER-hard.
\end{proof}

Now that hardness of this restricted version of \CCIEXPL{} is proven, this result can be generalized in small steps until finally \cref{thm:Inequality} is proven.

Before we do this, we first note that in any \CCIEXPL{} formula, constraints of the form $x = q \cdot y$, where $x, y$ are variables and $q$ is a rational constant, can be enforced using a constant number of addition constraints and new variables. To illustrate this, we will discuss the case where $q \in [0,1]$ here. Other cases can be handled in a similar manner. Assume that $q = a/b$ for a positive integer $b$ and an integer $0 \leq a \leq b$. Now we can introduce variables $\var{\frac{i}{b}y}$ for all~$0 \leq i \leq b$, which satisfy constraints
\begin{align*}
\var{\frac{0}{b}y} &= \var{\frac{0}{b}y} + \var{\frac{0}{b}y},&\\
\var{\frac{i+1}{b}y} &= \var{\frac{i}{b}y} + \var{\frac{1}{b}y} & \text{for } 0 \leq i < b,\\
\var{\frac{b}{b}y} &= \var{\frac{0}{b}y} + y,&\\
\var{\frac{a}{b}y} &= \var{\frac{0}{b}y} + x.&
\end{align*}
This exactly enforces that $x = \frac{a}{b} \cdot y$.

The first step in working from \cref{lemma:CCIEXPL_fg_approx_sqr} to \cref{thm:Inequality} is to get rid of the constraint that $T(n)$ has to be bounded by $\frac{1}{4}$.

\begin{lemma}\label{lemma:CCIEXPL_T_no_bound}
Let $\domain \subseteq \R$ be a neighborhood of 0, and let $f, g \colon \domain \to \R$ be functions such that $|f(x) - x^2| \leq \frac{1}{10}|x|^3$ and $|g(x) - x^2| \leq \frac{1}{10}|x|^3$ for all $x \in \domain$. Let $T$ be bounded and \nicelyComputable. In this setting, \CCIEXPL{} is \ER-hard, even when considering only instances where $\delta = T(n)$, with~$n$ being the number of variables.
\end{lemma}

\begin{proof}
Let $c_1$ be some rational constant such that $0 < c_1 < \frac{1}{4}$ and $[-c_1, c_1] \subseteq U$. We will give a self-reduction from instances with $\delta = T^*(n)$ to instances with $\delta = T(n)$, where $T^*$ is some \nicelyComputable{} function bounded by $c_1$ which is yet to be determined.

Let $c_2$ be a rational constant that strictly bounds $T$ from above, and denote $c = \frac{c_2}{c_1}$. 
Without loss of generality we may assume that $c_2 \geq \frac{1}{4}$, so in particular $c \geq 1$. Let $(\delta, \Phi)$ be some instance of \CCIEXPL{} with $n$ variables, and $\delta = T^*(n)$. We will build an \equisatisfiable{} instance $(\delta', \Psi)$ with $m$ variables and fix $T^*$, such that $\delta' = c\delta = T(m)$. For this, we add to $\Phi$ extra variables $\var{\delta}$ and $\var{\delta'}$, together with constraints and auxiliary variables enforcing $\var{\delta'} = \delta'$ and $\var{\delta'} = c\var{\delta}$. Next we replace every constraint of the form $x = \delta$ by a constraint of the form $x = \var{\delta}$. This gives us the formula $\Psi$. Note that the solutions to $\Phi$ directly correspond to solutions of $\Psi$. The fact that $\delta' \geq \delta$ and the promises on $\Psi$ imply that all solutions of $\Psi$ are contained in $[-\delta', \delta']$. Furthermore, the fact that $\Phi$ is \domainadherent implies that also $\Psi$ is \domainadherent.

Note that the number of variables in $\Psi$ is exactly $n$ plus some constant $d$ which only depends on the function $T$. Therefore we can take $T^*(n) = \frac{1}{c}T(n+d)$ and $\delta' = T(n+d)$, this ensures that in the preceding construction we have $\delta' = c \cdot \delta$, and therefore $\Phi$ and $\Psi$ are indeed \equisatisfiable. Furthermore, since $T$ is bounded from above by $c_2$, it follows that $T^*$ is bounded from above by $c_1$. In particular, it follows that $[-T^*(n), T^*(n)] \subseteq U$ for all values of $n$.
\end{proof}

Now the next step when working towards \cref{thm:Inequality}, is to slightly relax the constraints on $f$ and $g$, by allowing the difference with squaring to be any $O(x^3)$ function, instead of just functions bounded by $\frac{1}{10}|x|^3$ in absolute value.

\begin{lemma}
Let $\domain \subseteq \R$ be a neighborhood of 0, and let $f, g \colon \domain \to \R$ be functions such that $f(x) = x^2 + O(x^3)$ and $g(x) = x^2 + O(x^3)$ as $x \to 0$. 
Let $T$ be bounded and \nicelyComputable. In this setting, \CCIEXPL{} is \ER-hard, even when only considering instances where $\delta = T(n)$, with $n$ being the number of variables.
\end{lemma}
\begin{proof}
Let $c$ be a constant such that $|f(x) - x^2| \leq c|x|^3$ and $|g(x) - x^2| \leq c|x|^3$ for all $x \in \domain^{*}$ where $\domain^{*} \subseteq \domain$ is a neighborhood of $0$. Now let $N$ be a positive integer larger than $10c$. This implies that for all $x \in \domain^{*}$
\begin{align*}
|N^2f(x/N) - x^2| &\leq \frac{1}{10}|x|^3 \text{ and}\\
|N^2g(x/N) - x^2| &\leq \frac{1}{10}|x|^3.
\end{align*}
If we define $f^*$ and $g^*$ on the domain $\domain^*$ by $f^*(x) = N^2f(x/N)$ and $g^*(x) = N^2g(x/N)$, then using \cref{lemma:CCIEXPL_T_no_bound}, we get that the problem \CCIEXPL{} is \ER-hard for $f^*$ and $g^*$. For the rest of the proof of this lemma, we will denote this specific \CCIEXPL{} version 
using $f^*$ and $g^*$ by \CCIEXPLstar.

We give a reduction from \CCIEXPLstar to the \CCIEXPL version with $f$ and $g$. Let $(\delta, \Phi)$ be a \CCIEXPLstar instance. Now for every variable $x$ in this instance, we add extra variables $\var{x/N}$, $\var{x/N^2}$ and we add constraints enforcing
\begin{align*}
\var{\frac{x}{N}} &= \frac{\var{x}}{N},\\
\var{\frac{x}{N^2}} &= \frac{\var{x}}{N^2}.
\end{align*}
Next we replace every constraint of the form $y \geq f^*(x)$ (i.e.~\largerzero{y-f^*(x)}) by a constraint $\var{y/N^2} \geq f(\var{x/N})$ (i.e.~\largerzero{\var{y/N^2} - f(\var{x/N})}). Similarly, we replace every constraint of the form $y \leq g^*(x)$ (i.e.~\largerzero{g^*(x)-y}) by a constraint $\var{y/N^2} \leq g(\var{x/N})$ (i.e.~\largerzero{g(\var{x/N}) - \var{y/N^2}}).

This results in a \CCIEXPL{} formula $\Psi$. Note that any solution of $\Phi$ is \domainadherent and contained in $[-\delta, \delta]^n$, and that the domain $\domain^*$ of $f^*$ and $g^*$ is a subset of $U$. From this, it follows that any solution of $\Phi$ corresponds to a \domainadherent solution of $\Psi$ where all variables are in $[-\delta, \delta]$. For the other direction, note that because of the way in which the \largerzeronoarg{} constraint was defined, any solution of $\Psi$ also corresponds to a solution of $\Phi$. So this proves that $\Phi$ and $\Psi$ are \equisatisfiable and that all solutions of $\Psi$ satisfy the necessary conditions.

Note that the number of variables $m$ in the new \CCIEXPL{} instance depends in a linear manner on the number of variables $n$ in $\Phi$. If we also want to ensure that $\delta = T(m)$, then we can define $T^*(n)$ to be a bounded \nicelyComputable{} function such that $T(m) = T^*(n)$ for every possible value of $n$. Then we can decide to only consider \CCIEXPLstar{} instances with $\delta = T^*(n)$ in the described construction.
\end{proof}

In the next lemma, we allow even more possible $f$ and $g$.

\begin{lemma} \label{lemma:almost_poly_ineq}
Let $\domain \subseteq \R$ be a neighborhood of $0$, and let $f, g \colon \domain \to \R$ be functions such that $f(x) = ax + bx^2 + O(x^3)$ and $g(x) = cx + dx^2 + O(x^3)$ as $x \to 0$, where $a, b, c, d \in \Q$ and $b, d > 0$. 
Let $T$ be bounded and \nicelyComputable{}. In this setting, \CCIEXPL{} is \ER-hard, even when only considering only instances where $\delta = T(n)$, with $n$ being the number of variables.
\end{lemma}
\begin{proof}
We define $f^*$ and $g^*$ as $f^*(x) = (f(x)-ax)/b$ and $g^*(x) = (g(x)-cx)/d$. From the constraints on~$f$ and~$g$ it follows that $f^*(x) = x^2 + O(x^3)$ and $g^*(x) = x^2 + O(x^3)$. Therefore we can apply the previous lemma to these functions to find that the \CCIEXPL problem with~$f^*$ and~$g^*$ is \ER-hard. We will denote this problem by \CCIEXPLstar, and give a reduction from \CCIEXPLstar to the \CCIEXPL problem with~$f$ and~$g$ as defined in the lemma statement.

Let $(\delta, \Phi)$ be any instance of \CCIEXPLstar. We denote
\[
\delta' = (1+|a|+|b|+|c|+|d|)\delta.
\]
Now we build an instance $(\delta', \Psi)$ of \CCIEXPL{} in the following manner:
We start by adding a variable $\var{\delta}$ which is meant as a replacement for the $\delta$ in conditions of the form $x = \delta$ in $\Phi$. To introduce this variable, we introduce an auxiliary variable $\var{\delta'}$ and enforce the following constraints:
\begin{align*}
\var{\delta'} &= \delta',\\
\var{\delta'} &= (1+|a|+|b|+|c|+|d|)\var{\delta}.
\end{align*}
We also add every variable of $\Phi$ and all constraints of the form $x + y = z$ or $x \geq 0$ from $\Phi$ to $\Psi$, and for every constraint $x = \delta$ in $\Phi$ we add a constraint $x = \var{\delta}$ to $\Psi$.

For every constraint $y \geq f^*(x)$ of $\Phi$ we introduce a new variable $\var{ax+by}$ to $\Psi$ which we force to equal $ax + by$ using some linear constraints. Furthermore we add a constraint $\var{ax+by} \geq f(x)$. For constraints of the form $y \leq g^*(x)$ we do something similar.

In this way, $(\delta', \Psi)$ is a valid \CCIEXPL{} instance since all values of the variables in any solution can be seen to be bounded by $\delta'$ using the triangle inequality. Furthermore, the new instance $\Psi$ differs from $\Phi$ only by new auxiliary variables and otherwise has exactly the same constraints on the original variables. Thus $V(\Psi)$ is non-empty if and only if $V(\Phi)$ is non-empty, and $\Psi$ is \domainadherent since $\Phi$ is.

Finally, we will show that we might impose $\delta' = T(n)$ on the instances. Note that the number of variables created by the reduction is linear in the number of old variables and the number of constraints of the form $y \geq f^*(x)$ or $y \leq g^*(x)$ in $\Phi$. If $\Phi$ has $n$ variables, then there can be at most $O(n^2)$ different constraints of one of these two forms, so we can find some integer constant $k$ such that $\Psi$ has at most $k\cdot n^2$ variables. Now we can adjust the previous reduction to add sufficiently many extra variables (not occurring in any constraint) to make sure there are exactly $k \cdot n^2$ variables. We can also define $T^*$ as $T^*(n) = \frac{T\left(kn^2\right)}{1 + |a| + |b| + |c| + |d|}$, now restricting the inputs of \CCIEXPLstar{} to cases with $\delta = T^*(n)$ gives the desired result.
\end{proof}

The next step is to notice that any function which is three times differentiable with a nonzero second derivative satisfies the constraints from the previous lemma. 
This leads to the following result:

\CCIEXPLdifferentiableCOR*
\begin{proof}
Using Taylor's theorem, we find that
\begin{align*}
f(x) &= f'(0)x + \frac{f''(0)}{2}x^2 + O(x^3) \text{ and}\\
g(x) &= g'(0)x + \frac{g''(0)}{2}x^2 + O(x^3).
\end{align*}
To this, we can apply the previous lemma to find that \CCIEXPL{} is \ER-hard, even when we only consider instances with $\delta = T(n)$.
\end{proof}

%%%%%%%%%%%%%%%%%%%%%%%%%%%%%%%%%%%%%%%%%%
\subsection{Implicit Constraints}
\label{sec:Implicit}
%%%%%%%%%%%%%%%%%%%%%%%%%%%%%%%%%%%%%%%%%%
Using, \Cref{cor:fgETRINEQEXPL_differentiable} we will show \Cref{thm:Inequality} and \Cref{thm:Equality} in this order.
\Cref{lemma:first_impl} is almost equivalent to \cref{thm:Inequality}. 
The only difference is that the conditions $f_y(0, 0) > 0$ and $g_y(0, 0) < 0$ are added.
As a preparation, we need the Implicit function theorem. We state the exact version that we use here for the convenience of the reader.
\begin{theorem}[Implicit Function Theorem]
    Let $\domain \subseteq \R^2$ be a neighborhood of $(0,0)$.
    Let $f \colon \domain \to \R$ be a $C^3$-function with $f_y(0,0) \neq 0$ defining the set $S = \set{(x,y) \in \domain}{f(x,y) = 0}$.
    Now there is a neighborhood $\domain' \subseteq \R$ of $0$ with $(\domain')^2 \subseteq \domain$ and some $C^3$-function
    $f_{\expl} : \domain' \to \R$ such that
    $\set{(x,y) \in (\domain')^2}{y = f_{\expl}(x)} = S \cap (\domain')^2$. Furthermore, we have  $f_{\expl}'(x) = -\frac{f_x(x,f_{\expl}(x))}{f_y(x,f_{\expl}(x))}$ for all $x \in \domain'$.
\end{theorem}

\begin{lemma} \label{lemma:first_impl}
Let $\domain \subseteq \R^2$ be a neighborhood of the origin.
Let $f, g \colon \domain \to \R$ be two functions, with $f$ \wellbehaved and \convex, and $g$ \wellbehaved and \concave. Furthermore assume that their partial derivatives satisfy $f_y(0, 0) > 0$ and $g_y(0, 0) < 0$.
Let $T$ be bounded and \nicelyComputable. Then the problem \CCI{} is \ER-hard, even when considering only instances where $\delta = T(n)$, with $n$ being the number of variables.
\end{lemma}
\begin{proof}
Using the implicit function theorem, we find that in a neighborhood $(\domain')^2 \subseteq \domain$ of $(0, 0)$, the curve $f(x, y) = 0$ can also be given in an explicit form $y = f_{\expl}(x)$, where $f_{\expl}$ is some $C^3$-function $\domain' \to \R$. So for $(x, y) \in (\domain')^2$ we have $f(x, y) = 0$ if and only if $y = f_{\expl}(x)$. Since $f_y(0, 0) > 0$, it also follows that $f(x, y) \geq 0$ if and only if $y \geq f_{\expl}(x)$. Furthermore, the implicit function theorem also states that the derivative of $f_{\expl}$ is given by
\[
f_{\expl}'(x) = -\frac{f_x(x, f_{\expl}(x))}{f_y(x, f_{\expl}(x))}.
\]
From this, it can be computed that the second derivative in 0 is
\[
f_{\expl}''(0) = -\left(\frac{f_y^2f_{xx} - 2f_xf_yf_{xy} + f_x^2f_{yy}}{f_y^3}\right)(0, 0).
\]
Note that the fact that $f$ is \convex exactly implies that the numerator of this expression is a positive number. Using the assumptions from the lemma statement, we conclude that $f_\expl''(0)$ is a positive rational number.

In an analogous way, we can write the condition $g(x, y) \geq 0$ in the form $y \leq g_{\expl}(x)$ in some neighborhood of $(0, 0)$, where $g_{\expl}$ is a $C^3$-function with rational first and second derivative in $0$, and with positive second derivative. Without loss of generality we assume that~$f_{\expl}$ and $g_{\expl}$ have the same domain $U'$.

We would now like to conclude that the problem \CCI{} is equivalent to the problem \CCIEXPL{} for $f_\expl$ and $g_\expl$, but we need to be slightly careful because of the exact way in which the constraints involving $f$ and $g$ are defined. Recall that with the constraint $f(x, y) \geq 0$ in an \CCI{} instance we actually mean $\largerzero{f}$, which was defined to be satisfied whenever $(x, y)$ falls outside of $\domain$. Similarly the constraint $y \geq f_\expl(x)$ in an \CCIEXPL{} instance is satisfied whenever $x \not\in \domain'$.

We will give a reduction from \CCIEXPL{} to \CCI{}. Let $\delta_0$ be some constant such that $[-\delta_0, \delta_0] \subseteq \domain'$. 
Let $0 < c \leq \frac{1}{2}$ be a rational constant such that $cT(n) \leq \delta_0$ for all $n$. This exists, since $T$ is bounded. Now let $(\Phi, \delta')$ be a \CCIEXPL{} instance with $\delta' = T^*(n)$. Here $T^*$ will be determined later in such a way that $T^*(n) = cT(m)$, where $m$ is the number of variables in the \CCI instance $(\Psi, \delta)$ which we will construct now.
We start constructing $\Psi$ by adding variables $\var{\delta'}$ and $\var{\delta}$, with linear constraints enforcing $\var{\delta} = \delta$ and $\var{\delta'} = c\var{\delta}$. Next we add all linear constraints from $\Phi$ to $\Psi$, except that we replace constraints of the form $x = \delta'$ by $x = \var{\delta'}$. For every constraint $y \geq f_\expl(x)$ in $\Phi$, we add constraints to $\Psi$ enforcing $f(x,y) \geq 0$, $y \geq -\delta'$ and $y \leq \delta'$. Similarly, we replace $y \leq g_\expl(x)$ by constraints enforcing $g(x,y) \geq 0$, $y \geq -\delta'$ and $y \leq \delta'$.

Since $\delta' \leq \delta_0$, it follows that the constraints $\largerzero{y-f_\expl(x)}$ and $\largerzero{f}$ exactly coincide when restricted to $[-\delta', \delta']^2$. Since all solutions to $\Phi$ are promised to have all coordinates in $[-\delta', \delta']$, it follows that every solution to $\Phi$ gives rise to a solution of $\Psi$. From the definition of the \largerzeronoarg constraint it also follows that $\largerzero{f}$ is satisfied whenever both $\largerzero{y-f_\expl(x)}$ and $y \in U$. From this it follows that also every solution to $\Psi$ corresponds to a solution of $\Phi$. Since all solutions of $\Psi$ correspond to solutions of $\Phi$, it also follows that all solutions of $\Psi$ are contained in $[-\delta, \delta]^m$ and are \domainadherent.

To make the correct choice of $T^*$, we proceed as in the end of the proof of \cref{lemma:almost_poly_ineq}. That is, we add some extra variables to $\Psi$ to ensure that the total number of variables is always exactly $m = kn^2$ for some constant $k$, and choose $T^*(n) = cT(kn^2)$ for all $n$.
\end{proof}

From here it is a small step to prove the main result:

\renewcommand{\theorembstring}{(Restated)}
\CCITHM*
\begin{proof}
Without loss of generality, we may assume that $f_y(0, 0) \neq 0$ and $g_y(0, 0) \neq 0$. In any other case, we can just interchange the variables in one of the functions.

In the case where $f_y(0, 0) > 0$ and $g_y(0, 0) < 0$, we can apply the previous lemma and we are done. For the case $f_y(0, 0) < 0$ and $g_y(0, 0) < 0$, we can provide a reduction from \CCI with functions $f^*(x, y) = f(-x, -y)$ and $g^*(x, y) = g(x, y)$. For the case $f_y(0, 0) > 0$ and $g_y(0, 0) > 0$ we can make a reduction from \CCI with functions $f^*(x, y) = f(x, y)$ and $g^*(x, y) = g(-x, -y)$. and for the case $f_y(0, 0) < 0$ and $g_y(0, 0) > 0$, we can provide a reduction from \CCI with functions $f^*(x, y) = f(-x, -y)$ and $g^*(x, y) = g(-x, -y)$. Note that flipping the signs of the inputs of $f$ or $g$ does not influence any second partial derivative, while it does negate the first partial derivatives; therefore the mentioned starting points for the reductions can all be seen to satisfy the conditions from \cref{lemma:first_impl}.

As an example we discuss the case $f_y(0, 0) > 0$ and $g_y(0, 0) > 0$. We want to give reduction from the problem \CCI with functions $f^*(x, y) = f(x, y)$ and $g^*(x, y) = g(-x, -y)$; we denote this \CCI variation by \CCIstar. So suppose that $(\delta, \Phi)$ is a \CCIstar instance. Now we will construct a \CCI instance $(\delta, \Psi)$ (with the~$f$ and~$g$ from the theorem statement). We add every variable of $\Phi$ to $\Psi$, and for every such variable $x$ we also add an extra variable $\var{-x}$, together with a constraint enforcing $x + \var{-x} = 0$. 
Furthermore, we copy every constraint from $\Phi$ to $\Psi$, except for constraints of the form $g^*(x, y) \geq 0$. These constraints are replaced instead by~$g(\var{-x}, \var{-y}) \geq 0$. 
This finishes the construction.

If we also want to enforce that $\delta = T(m)$ with $m$ the number of variables in the new \CCI{} instance, then we can define $T^*$ as $T^*(n) = T(2n)$ and only consider \CCIstar{} instances with~$\delta = T^*(n)$. Note that the reduction always doubles the number of variables, and therefore this implies that also $\delta = T(m)$ with $m$ the number of variables in $\Psi$.
\end{proof}

As a final result in this section, we prove \cref{thm:Equality} as well. To do this, we start from \cref{cor:fgETRINEQEXPL_differentiable} and convert this to a result about \CEEXPL{}.

\begin{lemma}
Let $\domain \subseteq \R$ be a neighborhood of $0$, and let $f\colon \domain \to \R$ be a function which is three times differentiable such that $f(0) = 0$ and $f'(0), f''(0) \in \Q$ with $f''(0) \neq 0$. 
Let $T$ be bounded and \nicelyComputable. In this setting, \CEEXPL{} is \ER-hard, even when considering only instances where $\delta = T(n)$, with $n$ being the number of variables.
\end{lemma}
\begin{proof}
We apply \cref{cor:fgETRINEQEXPL_differentiable} to the case where $g = f$ to find that in this case \CCIEXPL is \ER-hard. We can reduce this problem to \CEEXPL{}. Let $(\delta, \Phi)$ be a \CCIEXPL instance. Now we construct an \CEEXPL{} formula $\Psi$. We copy all constraints of the form $x+y = z$, $x \geq 0$ and $x = \delta$ from $\Phi$. For every constraint $y \geq f(x)$ we introduce two new variables $\var{f(x)}$ and $\var{y-f(x)}$, which we restrict by constraints
\begin{align*}
\var{f(x)} &= f(x),\\
y &= \var{y-f(x)} + \var{f(x)}, \text{ and}\\
\var{y-f(x)} &\geq 0.
\end{align*}
In a similar manner we replace every constraint $y \leq f(x)$ by introducing new variables $\var{f(x)}$ and $\var{f(x)-y}$ and imposing the constraints
\begin{align*}
\var{f(x)} &= f(x),\\
\var{f(x)} &= \var{f(x)-y} + y, \text{ and}\\
\var{f(x)-y} &\geq 0.
\end{align*}
This completes the construction. It can easily be checked that every solution of $\Phi$ corresponds to a solution of $\Psi$, and vice versa.

In order to enforce that $\delta  = T(m)$ with $m$ the number of variables $\Psi$, we use a technique similar to that used in \cref{lemma:almost_poly_ineq}. Again we add extra variables to ensure that the number of variables in $\Psi$ is exactly $kn^2$ for some constant $k$, where $n$ is the number of variables in~$\Phi$. Then we take $T^*(n) = T(kn^2)$ and only consider instances $(\delta, \Phi)$ of \CCIEXPL{} which satisfy $\delta = T^*(n)$.
\end{proof}

\renewcommand{\theoremastring}{(Restated)}
\fETRTHM*
\begin{proof}
This proof is very similar to that of \cref{lemma:first_impl}. Without loss of generality, we may assume that $f_y(0, 0) \neq 0$, otherwise we can swap the variables. Using the implicit function theorem, we can write the condition $f(x, y) = 0$ in some neighborhood $(\domain')^2 \subseteq \domain$ of $(0, 0)$ as $y = f_\expl(x)$, where $f_\expl$ is some $C^3$-function $\domain' \to \R$. Using the fact that the curvature of $f$ is nonzero, the implicit function theorem also tells us that $f_\expl''(0) \neq 0$.

Now we give a reduction from \CEEXPL to \CE. Choose $\delta_0$ such that $[-\delta_0, \delta_0] \subseteq U'$, and let $c \leq \frac{1}{2}$ be a positive rational number such that $cT(n) \leq \delta_0$ for all $n$. Let $(\Phi, \delta')$ be a \CEEXPL instance with $\delta' = T^*(n)$ to be determined later. We will build an instance $(\Psi, \delta)$, where $\delta = T(m)$ with $m$ the final number of variables in $\Psi$. First we add variables $\var{\delta}, \var{\delta'}$ to $\Psi$ with constraints enforcing $\var{\delta'} = c\var{\delta}$. We replace all constraints $x = \delta'$ in $\Phi$ by $x = \var{\delta'}$, and we transfer all other linear equalities and inequalities from $\Phi$ to $\Psi$ as well. Finally, we replace every constraint in $\Phi$ of the form $y = f_\expl(x)$ by constraints enforcing $y = f(x)$, $y \geq -\delta'$ and $y \leq \delta'$.

For the same reasons as in the proof of \cref{lemma:first_impl}, the solutions of $\Phi$ are in correspondence with those of $\Psi$. Furthermore, the solutions of $\Psi$ will again all be \domainadherent and contained in $[-\delta, \delta]^m$ with $m$ the number of variables in $\Psi$. Finally, we may add some extra variables to $\Psi$ to ensure it has exactly $kn^2$ variables for some constant $k$, and then choose~$T^*(n) = cT(kn^2)$ for all $n$.
\end{proof}

%%%%%%%%%%%%%%%%%%%%%%%%%%%%%%%%%%%%%%%%%%%%
%%%%%%%%%%%%%%%%%%%%%%%%%%%%%%%%%%%%%%%%%%%%
\printbibliography
%%%%%%%%%%%%%%%%%%%%%%%%%%%%%%%%%%%%%%%%%%%%
%%%%%%%%%%%%%%%%%%%%%%%%%%%%%%%%%%%%%%%%%%%%

\appendix
%%%%%%%%%%%%%%%%%%%%%%%%%%%%%%%%%%%%%%%%%%%%
\section{Circle-Constraint}
\label{app:Circle}
%%%%%%%%%%%%%%%%%%%%%%%%%%%%%%%%%%%%%%%%%%%%
Here, we discuss the question of expressing multiplication
via linear equations and the circle constraint $x^2+y^2 = 1$.
We note that for real numbers $x$ and $y$ the following equivalence holds:

There exists a real $z$ such that $z^2 + (x+y)^2  = 1$ and $(z+x-y)^2 + (z-x+y)^2 = 1$
if and only if $8xy = 1$ and $|x+y| \leq 1$.
This can be used in turn to express the inversion constraint ($x\cdot y = 1$) after some scaling
and imposing range constraints.
Note that $x\cdot y = 1$ can be used to express squaring as follows
\[\frac{1}{\frac{1}{x} - \frac{1}{x+1}} - x = x^2.\]
And we saw already in the \Cref{subsec:ER} how squaring can be used to express multiplication.

\end{document}